\newif\iflong
\newif\ifshort
\newif\ifreference
\newcommand{\qedhere}{\tag*{$\square$}}
\newcommand{\vc}{\mathbf{vcn}}
\newcommand{\fe}{\mathbf{fen}}
\newcommand{\nd}{\mathbf{nd}}
\newcommand{\BRAC}{\textsc{BRAC}}
\newcommand{\cc}[1]{{\mbox{\textnormal{\textsf{#1}}}}\xspace} 
\newcommand{\NP}{\cc{NP}}
\newcommand{\FPT}{\cc{FPT}}
\newcommand{\R}{\mathbb{R}}
\newcommand{\bigO}{\mathcal{O}}
\newcommand{\yesInstance}{yes-instance}
\newcommand{\noInstance}{no-instance}
\newcommand{\noInstances}{no-instances}
\title{Fixed-Parameter Algorithms for Computing RAC Drawings of Graphs}
\titlerunning{Fixed-Parameter Algorithms for Computing RAC Drawings of Graphs}
\author{Cornelius Brand\orcidID{0000-0002-1929-055X} \and Robert Ganian\orcidID{0000-0002-7762-8045} \and Sebastian Röder \and Florian Schager}
\institute{Algorithms and Complexity Group, TU Wien (Vienna University of Technology) \\
\email{\{cbrand, rganian\}@ac.tuwien.ac.at, sebastian.roeder@student.tuwien.ac.at, florian.schager@tuwien.ac.at}
}
\Crefname{splemma}{Lemma}{Lemmas}
\Crefname{sptheorem}{Theorem}{Theorems}
\Crefname{spdefinition}{Definition}{Definitions}
\Crefname{spproperty}{Property}{Properties}
\Crefname{spcorollary}{Corollary}{Corollaries}
\begin{document}

\spnewtheorem{sptheorem}{Theorem}{\bfseries}{\itshape}
\spnewtheorem{splemma}[sptheorem]{Lemma}{\bfseries}{\itshape}
\spnewtheorem{spproperty}[sptheorem]{Property}{\bfseries}{\itshape}
\spnewtheorem{spdefinition}[sptheorem]{Definition}{\bfseries}{\itshape}
\spnewtheorem{spobservation}[sptheorem]{Observation}{\bfseries}{\itshape}
\spnewtheorem{spcorollary}[sptheorem]{Corollary}{\bfseries}{\itshape}

\renewenvironment{theorem}{\begin{sptheorem}}{\end{sptheorem}}
\renewenvironment{lemma}{\begin{splemma}}{\end{splemma}}
\renewenvironment{property}{\begin{spproperty}}{\end{spproperty}}
\renewenvironment{definition}{\begin{spdefinition}}{\end{spdefinition}}
\newenvironment{observation}{\begin{spobservation}}{\end{spobservation}}
\renewenvironment{corollary}{\begin{spcorollary}}{\end{spcorollary}}

\definecolor{turquoise}{rgb}{0.19, 0.84, 0.78}

\maketitle

\begin{abstract}

In a right-angle crossing (RAC) drawing of a graph, each edge is represented as a polyline and edge crossings must occur at an angle of exactly $90^\circ$, where the number of bends on such polylines is typically restricted in some way. 
While structural and topological properties of RAC drawings have been the focus of extensive research, little was known about the boundaries of tractability for computing such drawings.
In this paper, we initiate the study of RAC drawings from the viewpoint of parameterized complexity. 
In particular, we establish that computing a RAC drawing of an input graph $G$ with at most $b$ bends (or determining that none exists) is fixed-parameter tractable parameterized by either the feedback edge number of $G$, or $b$ plus the vertex cover number of $G$.
\begin{keywords}
    RAC drawings \and fixed-parameter tractability \and vertex cover number \and feedback edge number
\end{keywords}
\end{abstract}

\section{Introduction}

Today we have access to a wealth of approaches and tools that can be used to draw planar graphs, including, e.g., F\'ary's Theorem~\cite{Fary48} which guarantees the existence of a planar straight-line drawing for every planar graph and the classical algorithm of Fraysseix, Pach and Pollack~\cite{FraysseixPP88} that allows us to obtain straight-line planar drawings on an integer grid of quadratic size. 
However, much less is known about the kinds of drawings that can be achieved for non-planar graphs. 
The study of combinatorial and algorithmic aspects of such drawings lies at the heart of a research direction informally referred to as ``beyond planarity'' (see, e.g., the relevant survey and book chapter~\cite{DidimoLM19,Didimo20}).

An obvious goal when attempting to visualize non-planar graphs would be to obtain a drawing which minimizes the total number of crossings. 
This question is widely studied within the context of the crossing number of graphs, and while obtaining such a drawing is \NP-hard~\cite{CrossingNPh83} it is known to be fixed-parameter tractable when parameterized by the total number of crossings required thanks to a seminal result of Grohe~\cite{Grohe04}. 
However, research over the past twenty years has shown that drawings which minimize the total number of crossings are not necessarily optimal in terms of human readability. 
Indeed, the topological and geometric properties of such drawings may have a significantly larger impact than the total number of crossings, as was observed, e.g., by the initial informal experiment of Mutzel~\cite{Mutzel01} and the pioneering set of user experiments carried out by the graph drawing research lab at the University of Sydney~\cite{Huang07,HuangHE08,HuangEH14}. 
The latter works demonstrated that ``large-angle drawings'' (where edge crossings have larger angles) are significantly easier to read than drawings where crossings occur at acute angles.

Motivated by these findings, in 2011 Didimo, Eades, and Liotta investigated graph drawings where edge crossings are only permitted at $90^\circ$ angles~\cite{DIDIMO20115156} (see Figure~\ref{fig:example_rac} for an illustration).
Today, these \emph{right-angle crossing} (or \emph{RAC}) drawings are among the best known and most widely studied beyond-planar drawing styles~\cite{DidimoLM19,Didimo20}, with the bulk of the research to date focusing on understanding necessary and sufficient conditions for the existence of
such drawings as well as the space they require~\cite{AngeliniCDFBKS11,ArgyriouBS12,GiacomoDEL14,GiacomoDGLR15,BekosDLMM17,AngeliniBFK20,AngeliniBKKP22,Forster020}.
A prominent theme in the context of RAC drawings concerns the number of times edges are allowed to be bent: it has been shown that every graph admits a RAC drawing if each edge can be bent $3$ times~\cite{DIDIMO20115156}, and past works have considered straight-line RAC drawings as well as RAC drawings where the number of bends per edge is limited to $1$ or $2$.

\begin{figure}
\vspace{-0.5cm}
    \centering
    \begin{subfigure}[t]{.4\linewidth}
        \centering
        \includegraphics{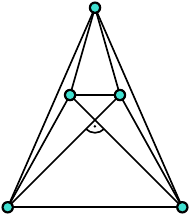}
        \caption{RAC drawing of $K_5$}
    \end{subfigure}
    \begin{subfigure}[t]{.59\linewidth}
        \centering
        \includegraphics{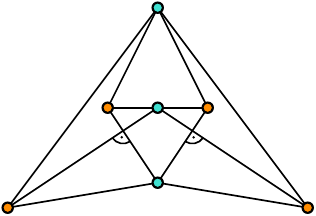}
        \caption{RAC drawing of $K_{3,3}$}
    \end{subfigure}
  \caption{Examples of RAC drawings.}
  \label{fig:example_rac}
\end{figure}

And yet, in spite of the considerable body of work concentrating on combinatorial and topological properties of such drawings, so far almost nothing is known about the complexity of computing a RAC drawing of a given graph. 
Indeed, while the problem of determining whether a graph admits a straight-line RAC drawing is \NP-hard~\cite{ArgyriouBS12} and was recently shown to be $\exists \mathbb{R}$-complete~\cite{Schaefer21}, there is a surprising lack of known algorithms that can compute such drawings for special classes of graphs or, more generally, parameterized algorithms that exploit quantifiable properties of the input graph to guarantee the tractability of computing RAC drawings (either without or with limited bends). 
This gap in our understanding starkly contrasts the situation for so-called $1$-planar drawings---another prominent beyond-planar drawing style for which a number of fixed-parameter algorithms are known~\cite{BannisterCE18,EibenGHKN20,EibenGHKN20b}---as well as recent advances mapping the boundaries of tractability for other graph drawing problems~\cite{HlinenyS19,BhoreGMN20,BhoreGMN22}.

\smallskip
\noindent
\textbf{Contribution.}\quad
We initiate an investigation of the parameterized complexity of determining whether a graph $G$ admits a RAC drawing. 
Given the well-motivated focus of previous works on limiting the amount of bends in such drawings, an obvious first choice for a parameterization would be to consider an upper bound $b$ on the total number of bends permitted in the drawing. 
However, on its own such a parameter cannot suffice to achieve fixed-parameter tractability in view of the \NP-hardness of the problem for $b=0$, i.e., for straight-line RAC drawings.

Hence, we turn towards identifying structural parameters of $G$ that guarantee fixed-parameter RAC drawing algorithms. 
While established decompositional parameters such as treewidth~\cite{RobertsonS84} and clique-width~\cite{CourcelleMR00} represent natural choices of parameterizations for purely combinatorial problems, the applicability of these parameters in solving graph drawing problems is complicated by the inherent difficulty of performing dynamic programming when the task is to obtain a drawing of the graph. 
This is why the parameters often used in this setting are non-decompositional, with the most notable examples being the \emph{vertex cover number} $\vc$ (i.e., the size of a minimum vertex cover) and the \emph{feedback edge number} $\fe$ (i.e., the edge deletion distance to acyclicity); further details are available in the overview of related work below.
As our main contributions, we provide two novel parameterized algorithms:

\begin{enumerate}
    \item a fixed-parameter algorithm for determining whether $G$ admits a RAC drawing with at most $b$ bends when parameterized by $\fe(G)$;
    \item a fixed-parameter algorithm for determining whether $G$ admits a RAC drawing with at most $b$ bends when parameterized by $\vc(G)+b$; 
\end{enumerate}

Both of the presented algorithms are constructive, meaning that they can also output a RAC drawing of the graph if one exists. 
The core underlying technique used in both proofs is that of \emph{kernelization}, which relies on defining reduction rules that can provably reduce the size of the instance until it is upper-bounded by a function of the parameter alone. 
While kernelization is a well-established and generic technique, its use here requires non-trivial insights into the structural properties of optimal solutions in order to carefully identify parts of the graph which can be simplified without impacting the final outcome.

We prove that both algorithms in fact hold for the more general case where each edge is marked with an upper bound on the number of bends it can support, allowing us to capture the previously studied $1$- and $2$-bend RAC drawings. 
Moreover, we show that the latter algorithm can be lifted to establish fixed-parameter tractability when parameterized by $b$ plus the \emph{neighborhood diversity} (i.e., the number of maximal modules) of $G$~\cite{Lampis10,Neighborhoodd,KnopKMT19}. 
In the concluding remarks, we also discuss possible extensions towards more general parameterizations and apparent obstacles on the way to such results.

\smallskip
\noindent
\textbf{Related Work.}\quad
Didimo, Eades and Liotta initiated the study of RAC drawings by analyzing the interplay between the number of bends per edge and the total number of edges~\cite{DIDIMO20115156}. 
Follow-up works also considered extensions and variants of the initial concept, such as upward RAC drawings~\cite{AngeliniCDFBKS11}, 2-layer RAC drawings~\cite{GiacomoDEL14,GiacomoDGLR15} and 1-planar RAC drawings~\cite{BekosDLMM17}. 
More recent works investigated the existence of RAC drawings for bounded-degree graphs~\cite{AngeliniBKKP22}, and RAC drawings with at most one bend per edge~\cite{AngeliniBFK20}. It is known that every graph admits a RAC drawing with at most three bends per edge~\cite{DIDIMO20115156}, and that determining whether a graph admits a RAC drawing with zero bends per edge is \NP-hard~\cite{ArgyriouBS12}.

The vertex cover number has been used as a structural graph parameter to tackle a range of difficult problems in graph drawing as well as other areas. 
Fixed-parameter algorithms for drawing problems based on the vertex cover number are known for, e.g., computing the obstacle number of a graph~\cite{BalkoCG00V022}, computing the stack and queue numbers of graphs~\cite{BhoreGMN20,BhoreGMN22}, computing the crossing number of a graph~\cite{HlinenyS19} and 1-planarity testing~\cite{BannisterCE18}. 
Similarly, the feedback edge number (sometimes called the \emph{cyclomatic number}) has been used to tackle problems which are not known to be tractable w.r.t.\ treewidth, including 1-planarity testing~\cite{BannisterCE18} and the \textsc{Edge Disjoint Paths} problem~\cite{GanianO21} (see also Table~1 in~\cite{GanianK22}).

These two parameterizations are incomparable: there are problems which remain \NP-hard on graphs of constant vertex cover number while being \FPT when parameterized by the feedback edge number (such as \textsc{Edge Disjoint Paths}~\cite{FleszarMS18,GanianO21}), and vice-versa. 
That being said, the existence of a fixed-parameter algorithm parameterized by the feedback edge number is open for a number of graph drawing problems that are known to be \FPT w.r.t.\ the vertex cover number; examples include computing the aforementioned stack, queue and obstacle numbers.

\ifreference
\ifshort
\smallskip
\noindent \emph{Statements where proofs or details are provided in the arXiv version are marked with~$\clubsuit$}.
\fi
\fi

\section{Preliminaries}
\iflong We assume familiarity with standard concepts in graph theory~\cite{Diestel}. All graphs considered in this manuscript are assumed to be simple and undirected. \fi

\ifshort
We assume familiarity with standard concepts in graph theory~\cite{Diestel}, and with basic parameterized complexity theory~\cite{DowneyF13,Cygan}. \ifreference ($\clubsuit$) \fi
All graphs considered in this manuscript are assumed to be simple and undirected.

The \emph{feedback edge number} of a graph $G$, denoted by $\fe(G)$, is the size of a minimum edge set $F$ such that $G-F$ is acyclic; it is well-known that such a set $F$ (and hence also the feedback edge number) can be computed in linear time.
The \emph{vertex cover number} of $G$, denoted $\vc(G)$, is the size of a minimum vertex cover of $G$, i.e., of a minimum set $X$ such that $G-X$ is edgeless. 
Such a minimum set $X$ can be computed in time $\bigO(1.2738^{|X|}+ |X|\cdot |V(G)|)$~\cite{ChenKX10}, and a vertex cover of size at most $2|X|$ can be computed in linear time by a trivial approximation algorithm. 
The third structural parameter considered here is the \emph{neighborhood diversity} $\nd(G)$ of $G$, which is the minimum size of a partition $\mathcal{P}$ of $V(G)$ such that for each pair $a,b$ in the same part of $\mathcal{P}$, it holds that $N(a)\setminus \{b\}=N(b)\setminus \{a\}$ where $N(a)$ and $N(b)$ are the open neighborhoods of $a$ and $b$, respectively. 
It is known that each part in such a partition $\mathcal{P}$ must be a clique or an independent set, and such a minimum partition can be computed in polynomial time~\cite{Lampis10}.
\fi

\smallskip
\noindent \textbf{RAC Drawings.}\quad
Given a graph $G = (V,E)$ on $n$ vertices with $m$ edges, a \emph{drawing} of $G$ is a mapping $\delta$ that takes vertices $V$ to points in the Euclidean plane $\R^2$, and assigns to every edge $e = uv \in E$ the image of a simple plane curve $[0,1] \rightarrow \R^2$ connecting the points $\delta(u), \delta(v)$ corresponding to $u$ and~$v$. 
We require that $\delta$ is injective on $V$, and furthermore that for all vertices $v$ and edges $e$ not incident to $v$, the point $\delta(v)$ is not contained in $\mathrm{int}(\delta(e))$, where $\mathrm{int}(\delta(e))$ is the image of $(0,1)$ under $\delta$.

A \emph{polyline drawing} of $G$ is a drawing such that for each edge $e\in E$, $\delta(e)$ can be written as a union $\delta(e) = \lambda^e_1 \cup \cdots \cup \lambda^e_t$ of closed straight-line segments $\lambda^e_1,\ldots,\lambda^e_t$ 
\iflong
such that:
\begin{itemize}[itemsep=0pt, topsep=0pt]
    \item for each $1\leq i\leq t-1$, the segments $\lambda^e_i$ and $\lambda^e_{i+1}$ intersect in precisely one of their shared end-points and moreover close an angle different than 180$^\circ$, and
    \item every other pair of segments is disjoint.
\end{itemize}
\fi

\ifshort such that (1) for each $1\leq i\leq t-1$, the segments $\lambda^e_i$ and $\lambda^e_{i+1}$ intersect in precisely one of their shared end-points and moreover close an angle different than 180$^\circ$, and (2) every other pair of segments is disjoint. \fi
The shared intersection points between consecutive segments are called the \emph{bends} of $e$ in the drawing $\delta$.

For two edges $e$ and $f$, their set of \emph{crossings} in the drawing $\delta$ is the set $\mathrm{int}(\delta(e)) \cap \mathrm{int}(\delta(f))$. 
We will assume without loss of generality that any drawing $\delta$ of $G$ has a finite number of crossings.

The central type of drawing studied in this paper are those that allow only \emph{right-angle crossings} between edge drawings (so-called \emph{RAC drawings}): 
We say that the edges $e,f \in E$ have a \emph{right-angle crossing} in a polyline drawing $\delta$ of $G$ if the crossing lies in the relative interiors of the respective line segments defining $\delta(e)$ and $\delta(f)$, and most crucially, the intersecting line segments of $\delta(e)$ and $\delta(f)$ are orthogonal to each other (i.e., they meet at a right angle).
Let $\delta$ be a polyline-drawing of a graph, $\beta: E \mapsto \{0,1,2,3\}$ a mapping, and $b \in \mathbb{N}$ a number. 
If every crossing of $\delta$ is a right-angle crossing, the number of bends counted over \emph{all} edges is at most $b$, and every edge itself has at most $\beta(e)$ bends, $\delta$ is called a \emph{$b$-bend $\beta$-restricted RAC drawing} of $G$.
We note that

\begin{itemize}[itemsep=0pt,topsep=0pt]
    \item  $0$-bend RAC drawings are straight-line RAC drawings (for any choice of $\beta$), 
    \item $m$ and $2m$-bend drawings with $\beta(e) = 1$ or $\beta(e) = 2$ for each edge $e$ gives the usual notion of $1$-bend and $2$-bend RAC drawings, respectively, and
    \item similarly, $3m$-bend drawings with $\beta(e)=3$ for each edge $e$ gives rise to the notion of $3$-bend RAC drawings, which exist for every graph~\cite{DIDIMO20115156}.
\end{itemize}

Based on the above, we can now formally define our problem of interest:

\begin{center}
    \begin{tcolorbox}[width=0.88\linewidth]
        \textsc{Bend-Restricted RAC Drawing (\BRAC)}\\
        \textbf{Input:} A graph $G$, an integer $b\geq 0$, and an edge-labelling $\beta: E \mapsto \{0,1,2,3\}$.\\
        \textbf{Question:} Does $G$ admit a $b$-bend $\beta$-restricted RAC drawing?
    \end{tcolorbox}
\end{center}

It has been shown that \textsc{$b$-bend $\beta$-restricted RAC Drawing} is $\exists \mathbb{R}$-complete~\cite{Schaefer21,Bieker20} even when restricted to the case where $b=0$. 
Without loss of generality, we will assume that the input graph $G$ is connected. 
We remark that while \BRAC\ is defined as a decision problem, every algorithm provided in this paper is constructive and can output a drawing as a witness for a \yesInstance{}.

\iflong
\smallskip
\noindent \textbf{Parameterized Algorithms.}\quad
We will not need a lot of the machinery of parameterized algorithms to state our results.
However, as it will turn out, our tractability results all come under the guise of so-called \emph{kernelization}, which requires some context.

A \emph{parameterized problem} is an ordinary decision problem, where each instance $I$ is additionally endowed with a \emph{parameter} $k$.
Given such a parameterized problem $\Pi$, we then say that a problem is \emph{fixed-parameter tractable} (\FPT) if there is an algorithm that, upon the input of an instance $(I,k)$ of $\Pi$, decides whether or not $(I,k)$ is a \yesInstance{} in time $f(k) \cdot n^{\bigO(1)}$, where $f$ is any computable function, and $n = |I|$ is the encoding length $|I|$ of the (parameter-free) instance $I$. 
This should be contrasted with parameterized problems that require time, say, $n^k$ to solve, which are not fixed-parameter tractable.

For instance, we may ask if a graph has a vertex-cover of size at most $k$, and declare $k$ the parameter of the instance. 
In this case, the problem is solvable in time $2^k \cdot n^{\bigO(1)}$, and hence \FPT; in contrast, asking for a dominating set of size $k$ (under some complexity assumptions) requires time $n^k$ for every $k$. 
Closer to the problems treated in this paper are structural parameterizations in the following sense: 
Suppose we are given a graph $G$ and a number $k$ such that $G$ has a vertex-cover of size at most $k$. 
Can we leverage this information to solve some (other) graph problem at hand? In this case, we say that we parameterize the problem \emph{by the vertex cover number}.

When using such parameterizations in our results, we will crucially rely on the following notion: 
A \emph{kernelization} (or kernel, for short) of $\Pi$ is a polynomial-time algorithm (in $n$, and we may assume $k \leq n$ holds) that takes an instance $(I,k)$ as input, and produces as output another instance $(I',k')$ with the following properties: there is some computable function $g$ such that both $k'$ and $|I'|$ are bounded from above by $g(k)$, and $(I,k)$ is a \yesInstance{} of $\Pi$ if and only if $(I',k')$ is.

That is, a kernelization algorithms preprocesses instances of arbitrary size into instances that are ``parameter-sized,'' and in particular (assuming $\Pi$ was decidable), this implies an algorithm running in time $n^{\bigO(1)} + h(g(k))$ for some function $h$ (where $h(g(k))$ is the running time of any algorithm solving instances of $\Pi$ of size $g(k)$). 
This means in particular that $\Pi$ is fixed-parameter tractable (and, as a standard result in parameterized algorithms, the converse of this claim holds as well). 
We refer to the standard textbooks~\cite{DowneyF13,Cygan} for a general treatment of parameterized algorithms.

The \emph{feedback edge number} of a graph $G$, denoted $\fe(G)$, is the size of a minimum edge set $F$ such that $G-F$ is acyclic. 
It is well-known that such a set $F$ (and hence also the feedback edge number) can be computed in linear time, since $G-F$ is a spanning tree of $F$. 
The \emph{vertex cover number} of $G$, denoted $\vc(G)$, is the size of a minimum vertex cover of $G$, i.e., of a minimum set $X$ such that $G-X$ is edgeless. 
Such a minimum set $X$ can be computed in time $\bigO(1.2738^{|X|}+ |X|\cdot |V(G)|)$~\cite{ChenKX10}, and a vertex cover of size at most $2|X|$ can be computed in linear time by a trivial approximation algorithm. 
The third structural parameter considered here is the \emph{neighborhood diversity} $\nd(G)$ of $G$, which is the minimum size of a partition $\mathcal{P}$ of $V(G)$ such that for each $a,b$ in the same part of $\mathcal{P}$ it holds that $N(a)\setminus \{b\}=N(b)\setminus \{a\}$. 
It is well known that each part in such a partition $\mathcal{P}$ must be either a clique or an independent set, and such a minimum partition can be computed in polynomial time~\cite{Lampis10}.
\fi

\section{An Explicit Algorithm for \BRAC}

As already pointed out above, our results for fixed-parameter tractability come as kernels.
While there is a generic formal equivalence between the existence of a kernel and a decidable problem being fixed-parameter tractable, this doesn't by itself yield explicit bounds on the running time of the algorithm that results from this generic strategy. 
In order to derive concrete upper bounds on the running time of our algorithms, we provide an algorithm that solves \emph{$b$-bend $\beta$-restricted RAC drawing} with a specific running time bound. 
We do so via a combination of branching and an encoding in the existential theory of the reals.

\iflong \begin{theorem} \fi
\ifshort \ifreference \begin{theorem}[$\clubsuit$] \else \begin{theorem} \fi \fi
\label{theorem:explicit_runtime}
    An instance $(G,b,\beta)$ of \BRAC\ can be solved in time $m^{\bigO(m^2)}$, where $m$ is the number of edges of $G$.
\end{theorem}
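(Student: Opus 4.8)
The plan is to reduce the problem, via branching, to a bounded number of instances of the decision problem for the existential theory of the reals, each of polynomial size in $m$, and then invoke the known $\bigO(\text{poly})^{\text{poly}}$-time decision procedure for $\exists\mathbb{R}$ (e.g.\ Canny / Renegar). The key observation is that a $b$-bend $\beta$-restricted RAC drawing is determined, up to continuous deformation, by a finite amount of \emph{combinatorial} data: for each edge $e$, the number of bends it uses (at most $\min(b,3)$, and at most $\beta(e)$), and a description of the ``rotation system with crossings'', i.e.\ for every ordered pair of edges whether and in what order along each edge their segments cross, together with the left/right side information needed to make the arrangement realizable. Once all of this combinatorial data is fixed, the remaining question --- can the vertices and bend points be placed at real coordinates so that all prescribed crossings occur, all prescribed non-crossings are respected, no forbidden incidences occur, and every crossing is at a right angle --- is expressible as a first-order sentence over $(\mathbb{R};+,\cdot,<,0,1)$ with a bounded (indeed, $\bigO(m)$) number of existentially quantified variables (two coordinates per vertex and per bend point) and polynomially many polynomial (in fact, at most quadratic) constraints: collinearity/orientation predicates are polynomial equalities/inequalities, segment crossing is a conjunction of orientation predicates, and orthogonality of two segments is the single quadratic equation stating that the dot product of their direction vectors vanishes.

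**Key steps, in order.**
First I would formalize the notion of a \emph{drawing sketch}: an assignment of a bend-count $t_e \le \min(\beta(e), b)$ to each edge with $\sum_e t_e \le b$, which creates a set of at most $m + b \le 2m$ ``super-segments'' (at most $3m$, since $b$ plays no role once $b \ge 3m$, but we may cap at $4m$), together with, for each unordered pair of segments, a flag in $\{\text{cross}, \text{nocross}\}$, and for each segment a linear order of its crossing points along it. One shows that two drawings inducing the same sketch are ``equivalent'' and, conversely, that every RAC drawing induces some sketch; hence it suffices to decide, for each sketch, whether it is realized by some RAC drawing. Second, I would bound the number of sketches: the bend-counts contribute at most $4^m \cdot (b{+}1)^{?}$ --- but more cleanly, $\le 4^m$ choices since each $t_e \in \{0,1,2,3\}$, and whenever $b < 3m$ we discard sketches violating $\sum t_e \le b$; the number of segments is $s \le 4m$, the cross/nocross flags contribute $2^{\binom{s}{2}} = 2^{\bigO(m^2)}$, and the orderings of crossing points along each segment contribute at most $(s!)^s = m^{\bigO(m^2)}$. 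The product is $m^{\bigO(m^2)}$. Third, for a fixed sketch, I would write down the $\exists\mathbb{R}$ sentence: existentially quantify the $\bigO(m)$ real coordinates; assert, for each "cross" pair, that the two segments properly intersect in their relative interiors (a constant-size Boolean combination of orientation polynomials) in the prescribed order, assert non-degeneracy/non-incidence for vertices and edges, assert that consecutive segments of one edge close an angle $\neq 180^\circ$, and assert for each crossing that the two crossing segments are orthogonal (the quadratic dot-product equation). This sentence has $\bigO(m)$ variables and $\bigO(m^2)$ atoms of degree $\le 2$, so it is decided in time $m^{\bigO(m)}$ by the standard $\exists\mathbb{R}$ decision procedure. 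Finally, $(G,b,\beta)$ is a yes-instance iff some sketch's sentence is satisfiable; iterating over all $m^{\bigO(m^2)}$ sketches and spending $m^{\bigO(m)}$ per sketch gives total time $m^{\bigO(m^2)}$, and from a satisfying assignment one reads off an explicit drawing, so the algorithm is constructive.

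**Main obstacle.**
The technically delicate part is \emph{not} the $\exists\mathbb{R}$ encoding itself (which is routine computational real algebraic geometry) but establishing the completeness of the sketch enumeration: one must argue that every RAC drawing can, without changing its combinatorics, be assumed to have only finitely many crossings (already stated in the preliminaries), that each edge uses at most $\min(3, b)$ bends so the segment count is $\bigO(m)$ --- this uses the at-most-3-bends-per-edge convention baked into $\beta$ --- and, most carefully, that the finite data I record (which pairs cross, in what order along each segment) is genuinely \emph{sufficient} to pin down a realizability question, i.e.\ that no additional global information (such as which face a vertex lies in) is needed. The clean way to handle this is to observe that the $\exists\mathbb{R}$ formula I write does not need to force a \emph{specific} planar embedding of the arrangement; it only needs to force the recorded crossings and forbid the recorded non-crossings and incidences, and then the RAC/orthogonality constraints. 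Any real solution to that formula \emph{is} a valid $b$-bend $\beta$-restricted RAC drawing (of possibly a different ``shape'' than the one we started from, but that is irrelevant for a decision procedure), and conversely any RAC drawing yields a solution for its own sketch. So the enumeration is complete as long as every RAC drawing has \emph{some} sketch among those enumerated, which follows from the bounds on bends and crossings. Making this argument airtight --- in particular checking that "segments properly cross in their relative interiors in a prescribed cyclic/linear order" together with "orthogonal" is exactly capturable by a polynomial system of the claimed size --- is where the real work lies.
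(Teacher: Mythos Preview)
Your approach is correct and reaches the stated bound, but it takes a different route from the paper. The paper branches \emph{only} on the bend allocation (at most $4^m$ branches), subdivides each edge accordingly, and then asks a single $\exists\mathbb{R}$ query ``does this graph admit a straight-line RAC drawing?'' per branch; the $m^{\bigO(m^2)}$ arises because the encoding they cite (Bieker) introduces $\bigO(m^2)$ auxiliary real variables to express segment intersection, and the exponent in the $\exists\mathbb{R}$ decision procedure is the number of variables. You instead push all the combinatorics into the branching---enumerating not just bend counts but also which segment pairs cross and in what order along each segment---so that each leaf is an $\exists\mathbb{R}$ instance with only $\bigO(m)$ variables, solvable in $m^{\bigO(m)}$; the $m^{\bigO(m^2)}$ then comes from the number of sketches. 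Both trade-offs are valid. Two remarks worth making: first, your enumeration of crossing orders is actually unnecessary, as you yourself observe near the end---a formula that, for every segment pair, asserts ``either they do not cross, or they cross and are orthogonal'' (a Boolean combination of constant-degree polynomial (in)equalities in the endpoint coordinates, with no auxiliary variables) already captures exactly the RAC condition, so one could stop after the $4^m$ bend-allocation branching and solve each branch with an $\bigO(m)$-variable, $\bigO(m^2)$-atom formula; second, this observation in fact yields $m^{\bigO(m)}$ rather than $m^{\bigO(m^2)}$, so your route, once streamlined, is tighter than what the paper proves. The paper's version has the virtue of being shorter to state (one sentence of branching plus a citation for the encoding), while yours makes the variable count explicit and avoids the auxiliary-variable overhead.
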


\ifshort
    \begin{proof}[Sketch]
    We begin with a branching step in which we exhaustively consider all possible allocations of the bends to edges. 
    In each branch, we alter the graph $G$ by subdividing each edge precisely the number of times it is assumed to be bent in that branch. 
    At this point, it remains to decide whether this new graph $G'$ admits a straight-line RAC drawing. 
    To do this, one can construct a sentence in the existential theory of the reals that is true if and only if $G'$ admits such a drawing~\cite{Bieker20}. 
    To conclude the proof, we use the known fact that an existential sentence over the reals in $N$ variables over $M$ polynomials of maximal degree $D$ can be decided in time $(M \cdot D)^{\bigO(N)}$ (see, e.g.,~\cite[Theorem 13.13]{basu2006algorithms}).
    \qed    \end{proof}
\fi

\iflong 
\begin{proof}
    Observe that, without loss of generality, we may assume that $b\leq 3m$. 
    We begin by a branching step in which we exhaustively consider all possible allocations of the bends to edges, resulting in a total number of at most $4^m$ branches (some of which will be discarded due to exceeding the bound $b$ or violating $\beta$). 
    In each branch, we alter the graph $G$ by subdividing each edge precisely the number of times it is assumed to be bent in that branch. 
    At this point, it remains to decide whether this new graph $G'$ admits a straight-line RAC drawing, where $G'$ has $\bigO(m)$ edges and vertices, and we denote these $m'$ and $n'$, respectively.

    To do this, one can construct a sentence in the existential theory of the reals that is true if and only if $G'$ admits such a drawing. 
    The variables of the sentence will consist of $n'$ variable pairs $(x_{v_1},y_{v_1}),\ldots,(x_{v_{n'}},y_{v_{n'}})$, encoding the coordinates of the drawing of the vertices in $\R^2$.
    Furthermore, for every pair of edges with endpoints $u,v$ and $u',v'$, we can formulate a condition $\sigma(u,v,u',v') \Rightarrow \tau(u,v,u',v')$, where $\sigma$ is a polynomial condition in $x_u,x_v,x_u',x_v'$ encoding whether the straight-line segments corresponding to $uv$ and $u'v'$ intersect, and $\tau$ is a polynomial condition in $x_u,x_v,x_u',x_v'$ encoding whether these straight-line segments are perpendicular. 
    Indeed, the former requires an addition of another $m'^2$ auxiliary variables in the worst case, but both conditions can be expressed by polynomials of degree two. 
    This encoding is described in full detail by Bieker~\cite{Bieker20}.

    To conclude the proof, we note that  an existential sentence over the reals in $N$ variables over $M$ polynomials of maximal degree $D$ can be decided in time $(M \cdot D)^{\bigO(N)}$ (see, e.g.,~\cite[Theorem 13.13]{basu2006algorithms}).
    Note that, within essentially the same running time bound, one can also construct a representation of a solution for this system \cite[Theorem 13.11]{basu2006algorithms}.
\qed    \end{proof} 
\fi

\section{A Fixed-Parameter Algorithm via $\fe(G)$}
\label{sec:fe}

We begin our investigation by establishing a kernel for \textsc{Bend-Restricted RAC Drawing} when parameterized by the feedback edge number.
Our kernel is based on the exhaustive application of two reduction rules.

Let us assume we are given an instance $(G,b,\beta)$ of \BRAC\ and that we have already computed a minimum feedback edge set $F$ of $G$ in linear time.
The first reduction rule is trivial: we simply observe that vertices of degree one can always be safely removed since they never hinder the existence of a RAC drawing.

\iflong \begin{observation} \fi
\ifshort \ifreference \begin{observation}[$\clubsuit$] \else \begin{observation} \fi \fi
    \label{obs:fenleaves}
    Let $v \in V(G)$ be a vertex with degree one. 
    $G-\{v\}$ admits a $b$-bend $\beta$-restricted RAC drawing if and only if $G$ does as well.
\end{observation}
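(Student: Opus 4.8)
The plan is to prove both implications directly from the definition of a $b$-bend $\beta$-restricted RAC drawing, treating the degree-one vertex $v$ together with its unique incident edge $uv$ as a purely local feature that can be detached from, or attached to, any drawing of the rest of the graph. Throughout, let $u$ be the unique neighbour of $v$, and for $G-\{v\}$ let us use the edge-labelling $\beta$ restricted to $E(G)\setminus\{uv\}$.

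The forward implication is essentially immediate. Given a $b$-bend $\beta$-restricted RAC drawing $\delta$ of $G$, I would take its restriction to $V(G)\setminus\{v\}$ and $E(G)\setminus\{uv\}$. Injectivity on the vertices and the property that no vertex lies in the interior of a non-incident edge are inherited; each remaining edge keeps its polyline and thus its bend count, so the per-edge bounds given by $\beta$ still hold and the total number of bends can only decrease; and deleting the image of $uv$ only removes crossings, so every surviving crossing is still a right angle. Hence the restriction is a valid $b$-bend $\beta$-restricted RAC drawing of $G-\{v\}$.

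For the backward implication I would take a $b$-bend $\beta$-restricted RAC drawing $\delta'$ of $G-\{v\}$ and reinsert $v$ by a local perturbation near $\delta'(u)$. The key observation is that $\delta'(u)$ has positive distance to every edge not incident to $u$ (each such edge is a compact set not containing $\delta'(u)$), to every other vertex image, and to the first bend of each edge incident to $u$. Choosing a radius $\varepsilon>0$ below all of these finitely many positive quantities, the disk $D_\varepsilon$ centred at $\delta'(u)$ meets the drawing in nothing but finitely many straight radial segments emanating from $\delta'(u)$ (one per edge incident to $u$). These segments leave a nonempty open region inside $D_\varepsilon$; I would place $\delta(v)$ at a point $q$ in this region and draw $uv$ as the straight segment $[\delta'(u),q]$. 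This segment, minus its endpoint $\delta'(u)$, stays inside that region, hence it is disjoint from the rest of the drawing: it adds no bends and no crossings, $q$ lies on no edge and coincides with no vertex image, and the resulting drawing of $G$ inherits all RAC and bend properties from $\delta'$ (note $0\le\beta(uv)$ always holds).

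The only point requiring genuine care — the ``main obstacle'', such as it is — is making this sector argument precise: one must simultaneously ensure that $D_\varepsilon$ avoids every edge not incident to $u$, contains no vertex image other than $\delta'(u)$, and meets each edge incident to $u$ in a single radial segment. All three conditions hold for sufficiently small $\varepsilon$ precisely because the drawing is finite, and the remaining verifications are routine checks against the definition of a drawing. The degenerate case in which $u$ has degree zero in $G-\{v\}$ (i.e.\ $G=K_2$) is subsumed, since then the admissible region is simply $D_\varepsilon$ minus its centre.
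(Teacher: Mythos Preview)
Your proof is correct and follows essentially the same approach as the paper: restrict the drawing to $G-\{v\}$ for the forward direction, and for the backward direction place $v$ sufficiently close to its unique neighbour so that the new straight edge introduces no crossings. Your write-up is simply a more detailed execution of the paper's ``sufficiently close'' argument, making the $\varepsilon$-disk reasoning explicit.
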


\iflong 
\begin{proof}
    Clearly, if $G$ admits a $b$-bend $\beta$-restricted RAC drawing, then $G-\{v\}$ does as well (one may simply remove $v$ and its incident edge from the drawing). 
    On the other hand, if $G-\{v\}$ admits a $b$-bend $\beta$-restricted RAC drawing then we can extend this drawing to one for $G$ by placing $v$ sufficiently close to its only neighbor in a way which does not induce any additional crossings.
\qed    \end{proof} 
\fi

\begin{figure}
    \centering
    \begin{subfigure}[t]{.49\linewidth}
        \centering
        \includegraphics{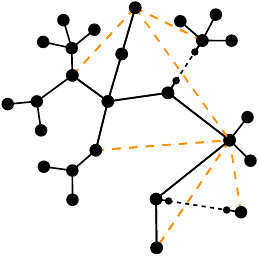}
        \caption{Before removing degree-one vertices.}
    \end{subfigure}
    \begin{subfigure}[t]{.49\linewidth}
        \centering
        \includegraphics{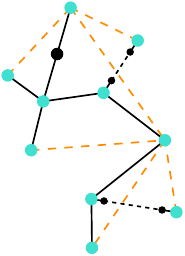}
        \caption{After removing degree-one vertices.}
    \end{subfigure}
    \caption{Reduction rule one: Degree-one vertices can be pruned. 
    Orange lines represent feedback edges, dashed black lines represent long paths and special vertices are marked in turquoise.}
    \label{fig:FE_Remove_Subtrees}
\end{figure}

Iteratively applying the reduction rule provided by Observation~\ref{obs:fenleaves} results in a graph of the form $G' = (V', E' \cup F)$, where $T := (V',E')$ is a tree with at most $2 \cdot \fe(G)$ leaves and where each leaf of $T$ is incident to at least one edge in $F$. 
We mark a vertex in $T$ as \emph{special} if it is an endpoint of an edge in $F$ or if it has degree at least $3$ in $T$ (see Figure~\ref{fig:FE_Remove_Subtrees} for an illustration). 
\ifshort The total number of special vertices can be upper-bounded by $4\cdot \fe(G)$. \ifreference ($\clubsuit$) \fi \fi
\iflong Note that the total number of special vertices is upper-bounded by $4\cdot \fe(G)$: the total number of endpoints of edges in $F$ is bounded by $2\cdot \fe(G)$, and since this also upper-bounds the number of leaves this implies that there can be at most $2\cdot \fe(G)$ vertices of degree at least $3$ in $T$. \fi

In order to define the crucial second reduction rule, we will partition the edges of $T$ into edge-disjoint paths such that each special vertex can only appear as an endpoint in such paths.

\begin{definition}
    We define the \emph{path partition} of $T$ in $G^\prime$ as the unique partition $P_1 \, \dot \cup \cdots \dot \cup\, P_\ell = E'$ such that
    all $P_i$ are pairwise edge-disjoint paths in $T$ whose endpoints are both special vertices, but with no special vertices in their interior.  
    We call $\ell$ the size of the path partition.
\end{definition}

An illustration is provided in Figure~\ref{fig:path_partition}.
Given the established bound on the number of special vertices, the size of the path partition is bounded by $4\cdot \fe(G)$.

\begin{figure}
    \centering
    \includegraphics{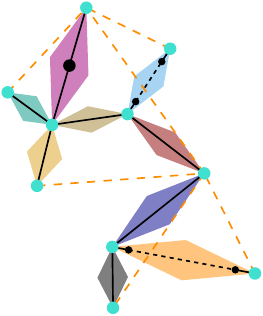}
    \caption{Path partition of $T$ with feedback edges in orange.}
    \label{fig:path_partition}
\end{figure}

At this point, let us assume that we have a path partition $P_1 \, \dot \cup \, P_2 \, \dot \cup \cdots \dot \cup\, P_{\ell}$ of $T$ in $G^\prime$, where we index the paths in increasing order of length.
Our next task is to divide these paths into short and long paths by identifying whether there exists a large gap in the lengths of these paths.

\begin{definition}
    Define $p_i := |P_i|$ for $i = 1, \dots, \ell$, and moreover define $P_0:=F$ and $p_0:=|F|$.
    Let $i_0$ be the minimal $i = 1, \dots, \ell$ such that $p_i > 9 \ell \cdot p_{i-1}$, if one such $i$ exists, otherwise we set $i_0 := \ell$.
    We call all paths $P_i$ with $1\leq i \leq i_0$ \emph{short} and all other paths \emph{long}.
    Then we define the subgraph $G_{\mathrm{short}}$ as the edge-induced subgraph of $\bigcup_{i=0}^{i_0} P_i$ of $G^\prime$ (i.e., $G_{\mathrm{short}}$ arises by removing all long paths from $G^\prime$).
\end{definition}

Our aim is now to argue that if $\delta_{\mathrm{short}}$ is a RAC drawing of $G_{\mathrm{short}}$, then we can always extend $\delta_{\mathrm{short}}$ to a RAC drawing of $G^\prime$. 
Without loss of generality we assume that all vertices in $V(G^\prime)$ have already been drawn in $\delta_{\mathrm{short}}$. 
First we create an intermediate drawing $\delta^\prime$ of $G^\prime$, which will in general not be a RAC drawing.
We define $\delta^\prime$ as an extension of $\delta_{\mathrm{short}}$, where each long path $P$ with endpoints $s$ and $t$ is represented as a simple straight-line segment from $\delta_{\mathrm{short}}(s)$ to $\delta_{\mathrm{short}}(t)$ with all interior vertices distributed arbitrarily along that line segment.
Doing this will in general violate the RAC property of $\delta^\prime$, hence in the next step we need to alter this straight-line segment in order to ensure that the drawing of $P$ crosses only at right angles.
For this we observe that any vertex on $P$ can be moved to effectively act as a bend in a polyline drawing of $P$.
We show that these ``additional bends'' can be used to turn all crossings into right-angle crossings.

\iflong \begin{lemma} \fi
\ifshort \ifreference \begin{lemma}[$\clubsuit$] \else \begin{lemma} \fi \fi
    \label{lemma:polyline_construction}
    Let $P$ be a long path with endpoints $s$ and $t$ and consider its straight-line representation $L$ in $\delta^\prime$.
    Assume $L$ intersects $k$ straight-line segments in $\delta^\prime$.    
    Then, there exists a polyline segment $L^\star$ from $\delta^\prime(s)$ to $\delta^\prime(t)$ with at most $3k$ bends that intersects precisely the line segments intersected by $L$, where each such segment is crossed precisely once and at a right angle.
\end{lemma}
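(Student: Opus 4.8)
The plan is to build $L^\star$ from the straight segment $L$ by a family of purely \emph{local} repairs, one inside a tiny disk around each point in which $L$ meets some other segment, and to charge at most three bends to each repair. First I would set up notation: let $\tau_1,\dots,\tau_k$ denote the straight-line segments of $\delta^\prime$ crossed by $L$, and let $x_i$ be the unique point of $L\cap\tau_i$, which by the definition of a crossing lies in the relative interior of both $L$ and $\tau_i$. Since $\delta^\prime$ has only finitely many vertices, bends and crossings, I would next perform a general-position reduction: after an arbitrarily small perturbation of $\delta^\prime$ that does not change which segments $L$ crosses, assume that the points $x_1,\dots,x_k$ are pairwise distinct and that $L$ runs through no vertex and no bend of $\delta^\prime$. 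I would then fix one radius $\rho>0$ small enough that the closed disks $B_i$ of radius $\rho$ centred at $x_i$ are pairwise disjoint, that each $B_i$ contains no vertex and no bend of $\delta^\prime$ and meets no segment of $\delta^\prime$ other than $\tau_i$ and an arc of $L$, and that $\rho$ is strictly smaller than the distance from $L$ to every segment (and every vertex) of $\delta^\prime$ not lying on $L$; all these quantities are positive because only finitely many objects are involved and, under the general-position assumption, none of these forbidden objects can sit at any $x_i$.

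Inside a single disk $B_i$ the repair is elementary. Orienting $L$ from $s$ to $t$, let $a_i$ and $b_i$ be the two points where $L$ crosses $\partial B_i$, with $a_i$ before $x_i$. If $L$ already meets $\tau_i$ at a right angle I would leave $L$ untouched in $B_i$; otherwise I would replace the arc of $L$ between $a_i$ and $b_i$ by the two-link polyline $a_i\to p_i\to b_i$, where the link $a_ip_i$ is orthogonal to $\tau_i$ and $p_i$ is placed just across the supporting line of $\tau_i$ from $a_i$, with $a_i,p_i,b_i$ chosen non-collinear. Because $L$ crosses $\tau_i$ transversally at $x_i$, the points $a_i$ and $b_i$ lie on opposite sides of that supporting line, hence $p_i$ and $b_i$ lie on the same side; it then follows that $a_ip_i$ crosses $\tau_i$ exactly once --- orthogonally and in its relative interior --- while $p_ib_i$ does not cross $\tau_i$ at all, and since the whole detour stays inside $B_i$ it meets no other segment of $\delta^\prime$. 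The corners $a_i,b_i,p_i$ are genuine bends (the first two because the direction of $L$ differs from the direction of $a_ip_i\perp\tau_i$, which is exactly the case we excluded, and the last by the non-collinear choice), so the repair costs at most three bends. This step also makes transparent why two would not be enough in general: any segment joining two points of $L$ lies on the supporting line of $L$, and that line is orthogonal to $\tau_i$ precisely when no repair was needed.

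Finally I would assemble $L^\star$ as $L$ with all these local replacements performed simultaneously, and verify the three assertions. The retained arcs of $L$ outside $\bigcup_i B_i$ are untouched and each detour is confined to its own disk, so $L^\star$ is a simple polyline from $\delta^\prime(s)$ to $\delta^\prime(t)$ avoiding every vertex of $\delta^\prime$ not on $L$, and its total bend count is at most $3k$. Every retained arc of $L$ is disjoint from each $\tau_i$, since the only point of $L\cap\tau_i$ was $x_i\in B_i$, which was removed; hence all crossings of $L^\star$ come from the detours, and the detour in $B_i$ crosses $\tau_i$ exactly once and at a right angle, and no other segment. Conversely, any segment of $\delta^\prime$ disjoint from $L$ stays at distance more than $\rho$ from $L$, so it avoids every $B_i$ as well as every retained arc, and is therefore not crossed by $L^\star$. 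The main obstacle is not any single geometric fact --- each is routine --- but organising the locality bookkeeping so that all of these statements hold at once, i.e.\ the uniform choice of $\rho$ together with the preliminary general-position step. The one genuinely additional case is that of several segments crossing $L$ in a common point: within the corresponding disk I would sort the (pairwise non-parallel) supporting lines through that point by angle and chain together $r$ orthogonal crossings joined by short connector segments, each connector confined to a single angular sector so that it crosses no line; this uses $2r+2\le 3r$ bends when $r\ge 2$ (and the three-bend construction above when $r=1$), so the global bound $3k$ is preserved.
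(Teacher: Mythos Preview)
Your local-repair strategy is the same as the paper's, and your treatment of a single transversal crossing (three bends $a_i,p_i,b_i$ inside a tiny disk) is correct and matches the paper's case~1. The gap is the general-position step. You propose to perturb $\delta'$, but $\delta'$ extends the \emph{fixed} RAC drawing $\delta_{\mathrm{short}}$ of $G_{\mathrm{short}}$: moving any vertex or bend of $\delta_{\mathrm{short}}$, even infinitesimally, will generically destroy the right-angle crossings already present in $\delta_{\mathrm{short}}$. Nor can you perturb $L$ itself, since its endpoints $\delta'(s),\delta'(t)$ are vertices of $G_{\mathrm{short}}$ and $L$ is by definition the straight segment between them. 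So the degenerate configurations you try to perturb away must in fact be handled directly.

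You do return to one of them---several segments meeting $L$ at a common interior point---and your $(2r{+}2)$-bend chaining there is fine (this is the paper's case~2). What remains untreated is the situation where $L$ passes through a vertex or bend $v$ of $\delta'$, so that the intersection of $L$ with the incident segments lies at their \emph{endpoints} rather than in their interiors; and, as a subcase, where $L$ is collinear with a segment $\tau$, so that $L\cap\tau$ is a whole subsegment rather than a single point. Neither fits your framework: in the collinear case there is no point $x_i$ to centre a disk on, and your transversality reasoning (``$a_i$ and $b_i$ lie on opposite sides of the supporting line of $\tau_i$'') breaks down outright. The paper handles exactly these situations as its case~3, routing $L^\star$ around $v$ (and, in the collinear subcase, shifting $L^\star$ to run parallel to $\tau$ at distance $\varepsilon'$) while checking that the $3k$ budget still suffices. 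Once you replace the perturbation by such an explicit case analysis, your argument goes through.
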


\ifshort
\begin{proof}[Sketch]
    The proof considers several ways in which $L$ can intersect a line segment in $\delta^\prime$ (such as touching an endpoint of the segment, or crossing through its interior, or containing the line segment). 
    In each of these cases, we show that it is possible to add a small number of bends in the immediate vicinity of the respective segment to create a right-angle crossing with the interior of the segment. 
    An intuitive illustration of the main case is provided in Figure~\ref{fig:new_rac_in_epsilon_2}.
\qed    \end{proof}

\begin{figure}
    \centering
    \includegraphics{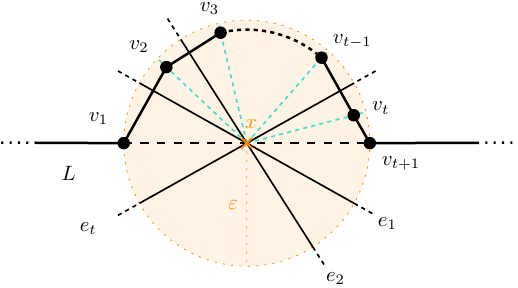}
    \caption{An illustration of how bends can be used to deal with the situation where multiple straight-line segments in $\delta^\prime$ intersect at a single point.}
    \label{fig:new_rac_in_epsilon_2}
\end{figure}
\fi

\iflong 
\begin{proof}
    For the purposes of this proof, it will be useful to treat each bend as an auxiliary vertex in $\delta^\prime$ and treat straight-line segments as edges.
    Let $e \in E(G_{\mathrm{short}})$ be an edge such that $\delta^\prime(e)$ is crossed by $L$ at the point $x$. 
    We now distinguish the following cases of how $L$ intersects the straight-line segments in $\delta'$, and deal with each case separately:

    \begin{enumerate}
        \item $x \in \mathrm{int}(\delta^\prime(e))$ and no other edge crosses through $x$.
        \item $\exists \, f \in E(G_{\mathrm{short}}): f \neq e \land x \in \mathrm{int}(\delta^\prime(e)) \cap \mathrm{int}(\delta^\prime(f))$.
        \item $\exists \, v \in V(G_{\mathrm{short}}): x = \delta^\prime(v)$.
    \end{enumerate}

    We show how to deal with each of these three cases below:
    
    \begin{enumerate}
        \item Let $\varepsilon > 0$ such that $B_{\varepsilon}(x) = \{y \in \mathbb{R}^2: |x - y| < \varepsilon\}$ contains no vertices and intersects no other edges outside of $L$ apart from $e$.
        We convert the intersection at $x$ to a right-angle crossing by introducing three bends on the boundary $\partial B_{\varepsilon}(x)$ as illustrated in Figure \ref{fig:new_rac_in_epsilon_1}:
        Put two vertices $v_1$, $v_3$ on the intersection of $L$ with $\partial B_{\varepsilon}(x)$ to maintain the position of $L$ outside of $B_{\varepsilon}(x)$. 
        Construct the middle vertex $v_2$ by taking the intersection of $\partial B_{\varepsilon}(x)$ with the normal line to $e$ going through $v_3$.
        Therefore we obtain our new polyline $L^\star$ by joining the parts of $L$ outside the $\varepsilon$-neighborhood with the polyline connecting the three vertices on the boundary of the $\varepsilon$-neighborhood.

        \begin{figure}[hbt!]
            \centering
            \includegraphics{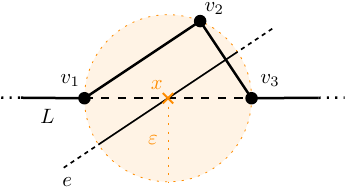}
            \caption{Dealing with a single crossing.}
            \label{fig:new_rac_in_epsilon_1}
        \end{figure}
        
        Since we chose $\varepsilon$ such that there are no further intersections within this $\varepsilon$-neighborhood and the polyline remains unchanged outside of this neighborhood, we are guaranteed to not introduce any new crossings nor alter existing ones.

        \item Generalizing the first case, we now consider an intersection point $x$ with multiple edges crossing it.
        We denote with $C = \{e_1,\dots,e_t\}$ the set of edges intersecting at $x$.
        Let $\varepsilon > 0$ such that $B_{\varepsilon}(x)$ contains no vertices and intersects no other edges apart from the edges in $C$.
        Again, two entry and exit nodes $v_1$, $v_{t+1}$ are added on the boundary of $B_{\varepsilon}(x)$ to preserve the original position of $L$ outside of $B_{\varepsilon}(x)$. 
        Assume the edges in $C$ are ordered clockwise with $e_1$ being the closest edge to $v_1$.
        Now we iteratively construct the next vertex $v_{i+1}$ by taking the intersection of the angular bisector between $e_i$ and $e_{i+1}$ with the normal line to $e_i$ going through $v_i$.
        If this point happens to lie outside of $B_{\varepsilon}(x)$ we take the intersection of the normal line with $\partial B_{\varepsilon}(x)$ instead. We refer to Figure \ref{fig:new_rac_in_epsilon_2} for an illustration.

        \begin{figure}[hbt!]
            \centering          
            \includegraphics{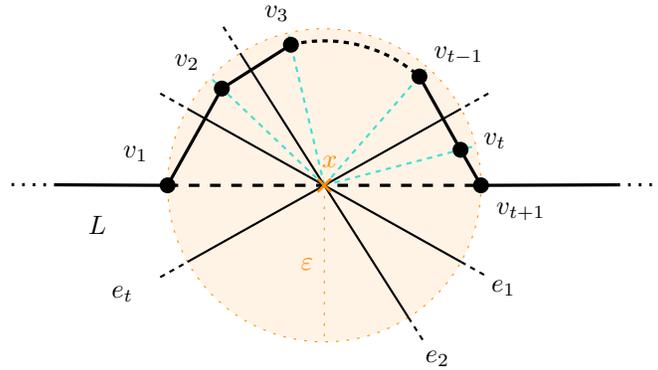}
            \caption{Crossing multiple edges at once.}
            \label{fig:new_rac_in_epsilon_2}
        \end{figure}

        In total, we need at most $t+1 \leq 3t$ vertices.

        \item In the third case, the straight-line segment $L$ intersects with a vertex $v$ at $x = \delta^\prime(v)$. 
        If $L \cap \delta^\prime(e) = \delta^\prime(v)$, i.e. $L$ does not run parallel to $\delta^\prime(e)$, we can simply take care of this equivalently as if $L$ would cross $\delta^\prime(e)$ in the interior.
        
        If otherwise $L \cap \delta^\prime(e) \supsetneq \delta^\prime(v)$, we observe that $L$ must necessarily contain both endpoints of $\delta^\prime(e)$, since the endpoints $\delta^\prime(s)$ and $\delta^\prime(t)$ of $L$ cannot lie in the interior of $\delta^\prime(e)$. 
        At the first of these endpoints encountered by $L$, say $w$, we again proceed analogously to the second case (i.e., as if $w$ was a crossing point), however instead of exiting the circle surrounding $w$ directly opposite to the entry point we exit at a distance of $\varepsilon'$ from there (for a sufficiently small $\varepsilon'$) and from there draw $L$ in parallel to $e$. 
    \end{enumerate}
    Since these three cases are exhaustive, the lemma follows.
\qed    \end{proof}
\fi

\iflong \begin{lemma} \fi
\ifshort \ifreference \begin{lemma}[$\clubsuit$] \else \begin{lemma} \fi \fi
    \label{lemma:number_of_crossings}
    Each long path intersects at most $3\ell \cdot p_{i_0}$ straight-line edge segments in $\delta^\prime$.
\end{lemma}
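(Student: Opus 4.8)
The plan is to fix an arbitrary long path $P=P_j$ (so $j>i_0$), look at its straight-line representation $L$ in $\delta'$, and bound the number of straight-line segments of $\delta'$ that $L$ meets. These segments fall into two groups: (i) segments belonging to the drawing $\delta_{\mathrm{short}}$ of $G_{\mathrm{short}}$, and (ii) segments belonging to the straight-line representations of the \emph{other} long paths. It is convenient to draw all long paths as genuine straight-line segments in general position, perform the count, and only afterwards convert them to polylines via Lemma~\ref{lemma:polyline_construction}; since each such conversion stays inside arbitrarily small neighbourhoods of the crossing points, it creates no new crossings and, with the $\varepsilon$'s chosen small enough, does not interfere with the conversion of the remaining long paths (a crossing between two long-path segments is fixed by bending only one of the two near that point, so after $P_j$ is processed the later paths already cross it at a right angle there).

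For group (ii): two distinct straight-line segments in general position meet in at most one point, and there are at most $\ell$ long paths in total, so $L$ meets the representations of the other long paths in fewer than $\ell$ points. For group (i): the graph $G_{\mathrm{short}}$ has exactly $|F|+\sum_{i=1}^{i_0}p_i$ edges, and in the $\beta$-restricted RAC drawing $\delta_{\mathrm{short}}$ each edge $e$ is a polyline with at most $\beta(e)\le 3$ bends, hence at most four straight-line segments; since $L$ is a single straight-line segment it meets each of these at most once. Thus group (i) contributes at most $4\bigl(|F|+\sum_{i=1}^{i_0}p_i\bigr)$.

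It remains to bound the two summands. For the tree part, $p_1\le\cdots\le p_{i_0-1}$ together with the defining inequality $p_{i_0}>9\ell\cdot p_{i_0-1}$ gives $\sum_{i=1}^{i_0-1}p_i\le (i_0-1)\,p_{i_0-1}<\ell\cdot\tfrac{p_{i_0}}{9\ell}=\tfrac{p_{i_0}}{9}$, so $\sum_{i=1}^{i_0}p_i<\tfrac{10}{9}\,p_{i_0}$. For the feedback part, recall that both endpoints of every feedback edge are special, and that contracting each path of the path partition turns the tree $T$ into a tree on the set $S$ of special vertices whose edge set is exactly the $\ell$ paths, so $|S|=\ell+1$; since $G$ is simple, distinct feedback edges join distinct pairs of special vertices, whence $|F|\le\binom{\ell+1}{2}\le\ell^2$. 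Putting everything together, $L$ meets at most $4\ell^2+\tfrac{40}{9}\,p_{i_0}+\ell$ straight-line segments of $\delta'$.

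Finally one checks this is at most $3\ell\,p_{i_0}$. We may assume $F\neq\emptyset$ (otherwise the leaf-removal rule leaves $T$ without edges and the statement is vacuous) and $\ell\ge 2$ (otherwise there are no long paths). Then in either case $p_{i_0}\ge 9\ell$: when $i_0\ge 2$ because $p_{i_0}>9\ell\,p_{i_0-1}\ge 9\ell$, and when $i_0=1$ because $p_{i_0}=p_1>9\ell\,|F|\ge 9\ell$. Using $\ell\le\tfrac{1}{9}p_{i_0}$ we get $4\ell^2=4\ell\cdot\ell\le\tfrac{4}{9}\ell p_{i_0}$ and $\ell\le\tfrac{1}{9}\ell p_{i_0}$, while $\ell\ge 2$ gives $\tfrac{40}{9}p_{i_0}\le\tfrac{20}{9}\ell p_{i_0}$; summing, $L$ meets at most $\tfrac{25}{9}\ell p_{i_0}<3\ell p_{i_0}$ segments, as claimed. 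The only genuinely delicate points are keeping the count honest — i.e.\ that $L$ crosses each edge-drawing of $\delta_{\mathrm{short}}$ only a constant number of times, which is exactly why it matters that $\delta_{\mathrm{short}}$ is a $\beta$-restricted drawing (at most three bends per edge) and why the long paths are drawn straight in $\delta'$ — and the combinatorial estimate $|F|\le\binom{\ell+1}{2}$, which is where simplicity of $G$ and the tree structure of $T$ enter; the rest is arithmetic with the constants hard-wired into the definitions of short/long paths and of $i_0$.
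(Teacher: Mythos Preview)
Your argument is correct. The route you take, however, is noticeably more elaborate than the paper's. The paper's proof is essentially a two-line count: a single straight segment $L$ meets every other long path at most once (there are $t$ of them) and every edge of $G_{\mathrm{short}}$ in at most a constant number of points (one per straight piece), and then the total edge count $\sum_{i=0}^{i_0}p_i$ is absorbed directly into $(\ell-t)\,p_{i_0}$ using that the $p_i$ are sorted, yielding $t+3(\ell-t)p_{i_0}\le 3\ell p_{i_0}$ in one stroke. You instead split the edges of $G_{\mathrm{short}}$ into the tree part and the feedback part, bound $\sum_{i=1}^{i_0}p_i$ sharply via the defining gap at $i_0$, and control $p_0=|F|$ by the separate combinatorial observation that the special vertices form a tree with $\ell$ edges, whence $|S|=\ell+1$ and $|F|\le\binom{\ell+1}{2}$; you then close with the inequality $p_{i_0}\ge 9\ell$. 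What your approach buys is robustness: you never rely on $p_0$ being dominated by $p_{i_0}$, and you use the honest constant $4$ (four segments per edge with three bends) rather than $3$. What the paper's approach buys is brevity. Your opening paragraph about general position and the order in which long paths are later converted is context for the surrounding argument rather than for this lemma, but it does no harm.
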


\iflong 
\begin{proof}
    Since each long path is represented as a straight-line segment in $\delta^\prime$, it can cross every other long path at most once.
    Since every edge $e \in E(G_{\mathrm{short}})$ can be bent at most $\beta(e) \leq 3$ times, $L$ can cross $e$ at most three times.
    Let $t$ be the number of long paths, then $L$ intersects no more than
    \[
        t + 3\sum_{i=0}^{i_0}p_i \leq t + 3(\ell - t)\cdot p_{i_0} \leq 3\ell \cdot p_{i_0}
    \] 
    straight-line edge segments.
\qed    \end{proof} 
\fi

\iflong \begin{theorem} \fi
\ifshort \ifreference \begin{theorem}[$\clubsuit$] \else \begin{theorem} \fi \fi
    \label{thm:fenkernel}
    \textsc{$b$-bend $\beta$-restricted RAC Drawing} admits a kernel of size at most $(36 \cdot \fe(G))^{4 \cdot \fe(G)}$.
    The kernel can be constructed in linear time.
\end{theorem}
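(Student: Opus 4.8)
The plan is to combine the two reduction rules established above with the structural lemmas (Lemma~\ref{lemma:polyline_construction} and Lemma~\ref{lemma:number_of_crossings}) to show that the subgraph $G_{\mathrm{short}}$ is an equivalent instance whose size is bounded by a function of $\fe(G)$ alone. First I would argue equivalence: on the one hand, $G_{\mathrm{short}}$ is a subgraph of $G'$, so any $b$-bend $\beta$-restricted RAC drawing of $G'$ restricts to one of $G_{\mathrm{short}}$ (discarding the drawings of the long paths can only decrease the number of bends and crossings). On the other hand, given a RAC drawing $\delta_{\mathrm{short}}$ of $G_{\mathrm{short}}$, I would form the intermediate drawing $\delta'$ of $G'$ by drawing each long path as a straight-line segment between the images of its endpoints, then invoke Lemma~\ref{lemma:polyline_construction} to reroute each long path $P_i$ (with $i>i_0$) as a polyline with at most $3k$ bends, where $k$ is the number of segments it crosses; by Lemma~\ref{lemma:number_of_crossings} we have $k \le 3\ell\cdot p_{i_0}$, so each long path needs at most $9\ell\cdot p_{i_0}$ bends. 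Since $P_i$ has $p_i > 9\ell\cdot p_{i-1} \ge 9\ell\cdot p_{i_0}$ interior vertices available (using the definition of $i_0$ and monotonicity of the $p_i$), each long path has enough interior vertices to absorb all the required bends, and this process can be carried out path by path without the rerouting of one long path interfering with another (each reroute happens in arbitrarily small neighborhoods of the original crossings). This yields a valid $b$-bend $\beta$-restricted RAC drawing of $G'$, and Observation~\ref{obs:fenleaves} lifts this to $G$.

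Next I would bound $|V(G_{\mathrm{short}})|$ and $|E(G_{\mathrm{short}})|$. The graph $G_{\mathrm{short}}$ consists of the feedback edges $F$ together with the short paths $P_1,\dots,P_{i_0}$, glued at the at most $4\cdot\fe(G)$ special vertices. Its size is therefore dominated by $\sum_{i=0}^{i_0} p_i$. Using $p_i \le 9\ell\cdot p_{i-1}$ for all $1\le i\le i_0$ (which holds by minimality of $i_0$) and $p_0 = |F| \le \fe(G)$, we get $p_i \le (9\ell)^i \cdot \fe(G)$, hence $\sum_{i=0}^{i_0} p_i \le (i_0+1)\cdot (9\ell)^{i_0}\cdot \fe(G) \le (9\ell)^{\ell+1}\cdot\fe(G)$ crudely, and since $\ell \le 4\cdot\fe(G)$ this is bounded by $(36\cdot\fe(G))^{4\cdot\fe(G)}$ after absorbing lower-order factors. (Some care with the exact constants and the edge case $\fe(G)\le 1$ will be needed to make the stated bound literally correct, but the argument is a routine geometric-sum estimate.) The new instance $(G_{\mathrm{short}}, b', \beta')$ keeps $\beta$ restricted to the surviving edges and replaces $b$ by $\min(b, |E(G_{\mathrm{short}})|\cdot 3)$ (any larger budget is irrelevant), so $b'$ is also bounded by the same function of $\fe(G)$, giving a genuine kernel.

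Finally, for the running time: a minimum feedback edge set is computable in linear time (take the complement of a spanning tree); exhaustively applying Observation~\ref{obs:fenleaves} to prune degree-one vertices is linear; computing the special vertices, the path partition, the values $p_i$, and the index $i_0$, and then deleting the long paths, are all linear-time operations. Hence the kernel is constructed in linear time. The main obstacle I anticipate is not any single step but rather bookkeeping: verifying that the number of interior vertices on each long path genuinely exceeds the number of bends demanded by Lemma~\ref{lemma:polyline_construction} (this is exactly why the threshold $9\ell$ appears in the definition of $i_0$, so the inequalities should close), and pinning down the constants in the size bound so that the clean expression $(36\cdot\fe(G))^{4\cdot\fe(G)}$ is actually an upper bound in all cases, including small $\fe(G)$.
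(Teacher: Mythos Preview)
Your proposal is correct and follows essentially the same route as the paper: both prune degree-one vertices, form the path partition, define $G_{\mathrm{short}}$, and use Lemmas~\ref{lemma:polyline_construction} and~\ref{lemma:number_of_crossings} together with the intermediate drawing $\delta'$ to reinsert the long paths iteratively, then bound $\sum_{i\le i_0} p_i$ via the geometric recurrence $p_i \le 9\ell\, p_{i-1}$ and $\ell\le 4\cdot\fe(G)$. Your explicit capping of $b$ at $3|E(G_{\mathrm{short}})|$ so that the bend budget is also parameter-bounded is a small refinement the paper omits (it simply leaves $b$ and $\beta$ unchanged).
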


\ifshort
\begin{proof}[Sketch]
    Consider an input $(G,b,\beta)$ with a feedback edge set $F$ of $G$. 
    We begin by exhaustively removing all vertices of degree one, as per Observation~\ref{obs:fenleaves}. 
    Next, we construct a path partition $\mathcal{P} = (P_1,\dots,P_\ell)$ of the tree $T = (V',E')$ in $G^\prime$ of size at most $4 \cdot \fe(G)$. 
    Then we split the paths in $\mathcal{P}$ into short and long paths, and we construct the subgraph $G_{\mathrm{short}}$ of $G'$ by removing all long paths from $G'$.
    To conclude the proof, it suffices to establish that $G_{\mathrm{short}}$ is a kernel, and in particular that $(G,b,\beta)$ is a \yesInstance\ of \BRAC\ if and only if $(G_{\mathrm{short}},b,\beta')$ is as well (where $\beta'$ is a restriction of $\beta$ to the edges in $G_{\mathrm{short}}$). 
    This can be shown by applying Lemmas~\ref{lemma:polyline_construction} and~\ref{lemma:number_of_crossings}.
\qed    \end{proof}
\fi

\iflong 
\begin{proof}
    Consider an input $(G = (V,E),b,\beta)$ with feedback edge set $F$. 
    In the first step according to Observation~\ref{obs:fenleaves} we iteratively prune all vertices of degree one and obtain the reduced graph $G' = (V', E' \cup F)$. 
    Next, we construct a path partition $\mathcal{P} = (P_1,\dots,P_\ell)$ of the tree $T = (V',E')$ of size at most $4 \cdot \fe(G)$. Then we split the paths in $\mathcal{P}$ into short and long paths. 
    We define the subgraph $G_{\mathrm{short}}$ as the graph obtained by removing all long paths from $G'$ and show that it is a kernel.
    
    \begin{figure}[hbt!]
        \centering
        \begin{subfigure}[t]{.49\linewidth}
            \centering
            \includegraphics{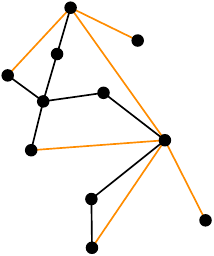}
            \caption{RAC drawing $\delta_{\mathrm{short}}$ of $G_{\mathrm{short}}$.}
            \label{fig:delta_short}
        \end{subfigure}
        \begin{subfigure}[t]{.49\linewidth}
            \centering
            \includegraphics{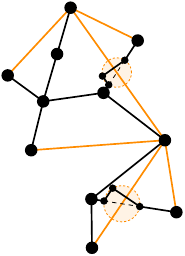}
            \caption{RAC drawing $\delta$ of $G^\prime$}
            \label{fig:delta}
        \end{subfigure}
        \caption{Extending the RAC drawing.}
    \end{figure}

    For that consider a RAC drawing $\delta_{\mathrm{short}}$ of $G_{\mathrm{short}}$. 
    We show that we can construct a RAC drawing $\delta$ of $G^\prime$ that extends $\delta_{\mathrm{short}}$ (see Figure~\ref{fig:delta}).
    To achieve this, we first define the intermediate drawing $\delta^\prime$, which extends $\delta_{\mathrm{short}}$ by simply drawing all long paths as straight-line segments.
    To obtain $\delta$ from $\delta^\prime$ we now iteratively replace the straight-line representations by the polyline constructions described in Lemma~\ref{lemma:polyline_construction}.
    Let $P_{i_0+1}$ be the first long path with straight-line drawing $L_{i_0 +1}$. 
    According to Lemma~\ref{lemma:number_of_crossings} it is involved in at most $3 \ell \cdot p_{i_0}$ crossings in $\delta^\prime$.
    By the definition of long paths we have $|P_{i_0+1}| > 9\ell \cdot p_{i_0}$ and thus we have enough vertices per crossing to construct $L_{i_0+1}^*$ via Lemma~\ref{lemma:polyline_construction}. 
    Define $\delta_{i_0 + 1}$ by replacing $L_{i_0+1}$ with $L_{i_0+1}^\star$ in $\delta^\prime$.
    Since $L_{i_0+1}^\star$ does not introduce any new crossings, we can repeat this process for all further long paths until we obtain $\delta := \delta_{\ell}$ as our final RAC drawing of $T$.
    Thus, we obtain $G_{\mathrm{short}}$ as a kernel of our original instance with $b$ and $\beta$ unchanged, since we did not use any additional bends in the construction of $\delta$.

    Next, we show that $G_{\mathrm{short}}$ can be constructed in linear time.
    We already observed that a Feedback Edge Set can be constructed in linear time.
    Pruning vertices of degree one can be done in linear time as well, while the task of finding a path partition of $T$ can be achieved by depth-first search in linear time as well.
    Finally, the size of $G_{\mathrm{short}}$ can be bounded by

    \begin{align*}
        \sum_{i=0}^{i_0} p_i 
        &\leq \sum_{i=0}^{\ell} p_0 (9\ell \cdot \fe(G))^i \\
        &\leq 2 \cdot \fe(G) (9\ell \cdot \fe(G))^{\ell} \\
        &\leq 2 (36 \cdot \fe(G))^{4 \cdot \fe(G)}. \qedhere
    \end{align*}
\end{proof}
\fi

Using \Cref{thm:fenkernel}, the runtime guarantee given by \Cref{theorem:explicit_runtime} and the fact that a feedback edge set of size $\fe(G)$ can be computed in linear time, we obtain:

\begin{corollary}
    \textsc{$b$-bend $\beta$-restricted RAC Drawing} is fixed-parameter tractable parameterized by $\fe(G)$, and in particular can be solved in time $2^{\fe(G)^{\bigO(\fe(G))}} + \bigO(|V(G)|)$.
\end{corollary}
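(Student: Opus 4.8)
The plan is to chain together the three ingredients already in hand: the linear-time computation of a feedback edge set, the kernelization of \Cref{thm:fenkernel}, and the explicit solver of \Cref{theorem:explicit_runtime}. Given an instance $(G,b,\beta)$ with $G$ connected, I would first compute a minimum feedback edge set of $G$ in linear time, obtaining $k := \fe(G)$. I would then invoke \Cref{thm:fenkernel} to produce, again in linear time, an equivalent instance $(G_{\mathrm{short}},b,\beta')$ whose size — in particular, whose number of edges $m$ — satisfies $m \le (36\cdot\fe(G))^{4\cdot\fe(G)}$. Equivalence of the two instances is exactly the content of \Cref{thm:fenkernel}, so it only remains to decide the kernel; and since that reduction, together with the drawing-extension procedure underlying Lemmas~\ref{lemma:polyline_construction} and~\ref{lemma:number_of_crossings}, is constructive, any RAC drawing found for $G_{\mathrm{short}}$ can be turned into one for $G$, so the resulting algorithm is constructive as claimed.

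To finish, I would run the algorithm of \Cref{theorem:explicit_runtime} on the kernel, at cost $m^{\bigO(m^2)}$, and then simplify this bound in terms of $k$. Since $m \le (36k)^{4k}$ we have $\log m = \bigO(k\log k)$ and $m^2 = 2^{\bigO(k\log k)}$, so the polynomial factor $\log m$ is absorbed and
\[
    m^{\bigO(m^2)} = 2^{\bigO(m^2\log m)} = 2^{2^{\bigO(k\log k)}} = 2^{k^{\bigO(k)}} = 2^{\fe(G)^{\bigO(\fe(G))}}.
\]
Adding the linear time spent on computing the feedback edge set and on kernelization — which for a connected graph is $\bigO(|V(G)| + |E(G)|) = \bigO(|V(G)| + \fe(G))$, and whose $\bigO(\fe(G))$ contribution is dwarfed by the term above — yields the claimed running time $2^{\fe(G)^{\bigO(\fe(G))}} + \bigO(|V(G)|)$, and in particular fixed-parameter tractability parameterized by $\fe(G)$.

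There is no genuine conceptual obstacle remaining: all the structural work lives in \Cref{thm:fenkernel} and its supporting lemmas, and the present statement is a straightforward composition. The only point requiring a little care is the running-time bookkeeping — verifying that composing an \emph{exponential}-size kernel with a solver running in time $m^{\bigO(m^2)}$ collapses precisely to the double-exponential bound $2^{\fe(G)^{\bigO(\fe(G))}}$ (the $m^2$ in the exponent is what produces the outer exponential, while the base $m$ contributes only a $\bigO(k\log k)$ factor inside it), and that this composition does not spoil the linear-time guarantees of the preprocessing phase.
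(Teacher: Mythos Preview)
Your proposal is correct and follows essentially the same approach as the paper: combine the linear-time kernelization of \Cref{thm:fenkernel} with the $m^{\bigO(m^2)}$ solver of \Cref{theorem:explicit_runtime}, then carry out the running-time arithmetic to collapse the bound to $2^{\fe(G)^{\bigO(\fe(G))}}$. The paper's calculation is organized slightly differently (it writes out the explicit exponents $(36\cdot\fe(G))^{8\cdot\fe(G)}$ before simplifying), but your route via $2^{2^{\bigO(k\log k)}} = 2^{k^{\bigO(k)}}$ is equivalent and arguably cleaner.
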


\iflong
\begin{proof}
    After constructing the kernel in $\bigO(|V(G)|)$ time, we apply the generic branching algorithm to the kernel with a runtime of 
    \begin{align*}
        ((36 \cdot \fe(G))^{4 \cdot \fe(G)})^{\bigO\left((36 \cdot \fe(G))^{8 \cdot \fe(G)}\right)} &= (36 \cdot \fe(G))^{\bigO\left((36 \cdot \fe(G))^{8 \cdot \fe(G) + 1}\right)} \\
        &\leq 2^{(\log \fe(G)) \cdot \fe(G)^{\bigO(\fe(G))}} \\
        &= 2^{\fe(G)^{\bigO(\fe(G))},}
    \end{align*}
    which concludes the proof.
\qed    \end{proof}
\fi

\section{Fixed-Parameter Tractability via $\vc(G)$}

As in Section~\ref{sec:fe}, the core tool used to establish fixed-parameter tractability for this parameterization is a kernelization procedure, although the ideas and reduction rules used here are very different. 
Let us assume we are given an instance $(G,b,\beta)$ of \BRAC; as our first step, we compute a vertex cover $C$ of size $k\leq 2 \cdot \vc(G)$ using the standard approximation algorithm.

We now partition the vertices of our instance $G$ outside of the vertex cover $C$ into \emph{types}, as follows. 
Two vertices in $G\setminus C$ are of the same type if they have the same set of neighbors in $C$; observe that the property of ``being in the same type'' is an equivalence relation, and when convenient we also use the term \emph{type} to refer to the equivalence classes of this relation. 
To avoid any confusion, we explicitly remark that two vertices may have the same type even when their incident edges are assigned different values by $\beta$.
The number of types is upper-bounded by $2^{k}$. 

We distinguish types by the number of neighbors in $C$; an illustration is provided in \Cref{fig:vc_types_overview}. Let a \textit{member} of a type $T$ be defined as a vertex in $T$ as well as its incident edges. 
By an exhaustive application of the first reduction rule introduced in Section~\ref{sec:fe} (cf. Observation~\ref{obs:fenleaves}), we may assume that there is no type with less than $2$ neighbors in $C$.

\begin{figure}[hbt!]
    \centering
    \includegraphics{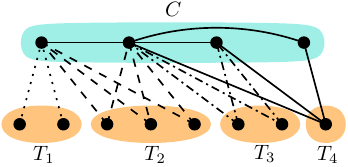}
    \caption{A graph split into its vertex cover $C$ (in turquoise) and its different types $T_1,\dots,T_4$ (in orange).}
    \label{fig:vc_types_overview}
\end{figure}

Turning to types with at least $3$ neighbors in $C$, we provide a bound on the size of each such type in a \yesInstance{} of $\BRAC$.

\begin{lemma}
    \label{lemma:bigger_types}
    If $(G,b,\beta)$ is a \yesInstance{} of \BRAC, then each type $T$ with $i \geq 3$ neighbors in $C$ has at most $\max(2,7-i)+b$ members.
\end{lemma}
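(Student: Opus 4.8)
The plan is to fix an arbitrary $b$-bend $\beta$-restricted RAC drawing $\delta$ of $G$ and bound $|T|$ in two additive pieces. Write $S := N(v)\cap C$ for $v\in T$, so $|S|=i\ge 3$, and call a member $v$ of $T$ \emph{bent} if at least one of the $i$ edges joining $v$ to $S$ carries a bend in $\delta$, and \emph{straight} otherwise. Since $T$ lies outside the vertex cover $C$ it is an independent set, so distinct members are incident to disjoint edge sets; hence each bent member uses up at least one of the at most $b$ bends of $\delta$, and there can be at most $b$ bent members. It therefore suffices to prove that at most $\max(2,7-i)$ members of $T$ are straight.

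So let $v_1,\dots,v_r$ be the straight members and set $A_a := \delta(c_a)$ for $c_a\in S$. Restricting $\delta$ to $\{v_1,\dots,v_r\}\cup S$ gives a straight-line RAC drawing of a supergraph of $K_{r,i}$, so every crossing among the segments $\sigma_{j,a} := \overline{\delta(v_j)A_a}$ is orthogonal. Two observations drive the argument: (i) two segments sharing an endpoint never cross, so every crossing has the form $\sigma_{j,a}\times\sigma_{k,b}$ with $j\ne k$ and $a\ne b$ --- in particular the ``star'' at each $v_j$ is drawn planarly and two distinct stars can only cross each other; and (ii) around any common neighbor $A_a$, at most two of the segments $\sigma_{\cdot,a}$ can leave $A_a$ in a prescribed direction (namely along the two opposite rays of a line through $A_a$), since a third would force some vertex $c_b$ into the interior of an edge. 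From (i) and (ii) one gets that a single segment can be crossed only a bounded number of times, while a short planarity argument shows that two straight members whose images both lie in the region bounded by appropriate lines through the $A_a$ (e.g.\ inside the triangle $A_1A_2A_3$ when these are not collinear) must have crossing stars, and each such crossing pins down a perpendicularity relation between one segment of each.

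From here I would finish by a case analysis on $i$, with targets $r\le 7-i$ for $i\in\{3,4\}$ and $r\le 2$ for $i\ge 5$. Since a straight-line RAC drawing of $K_{i,r}$ restricts to one of $K_{i',r'}$ for all $i'\le i$ and $r'\le r$, it is enough to rule out the two minimal offending cases, i.e.\ to show that neither $K_{3,5}$ nor $K_{4,4}$ has a straight-line RAC drawing: balancing the per-segment crossing bound coming from (i)--(ii) against the number of crossings these graphs are forced to contain, and discarding the degenerate sub-cases (a collinear triple among the $A_a$, a member on a line through two of the $A_a$, or several collinear members --- all avoidable on these vertices in a valid drawing, or handled directly), leaves only a small number of configurations to check by hand, from which the constants $\max(2,7-i)$ fall out.

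The step I expect to be the main obstacle is exactly this last one. A soft count (forced crossings versus crossings per segment) is lossy by a constant factor and does not on its own reach $\max(2,7-i)$, so the argument has to use the geometry of the stars directly: tracking, around each common neighbor, the cyclic order forced on the incident segments together with the perpendicularity constraints from every crossing, and then verifying the handful of surviving configurations explicitly. Carrying out that bookkeeping rigorously --- and making sure the degenerate cases are genuinely disposed of rather than assumed away --- is where the bulk of the work lies.
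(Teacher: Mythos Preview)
Your overall plan---split $T$ into bent and straight members, bound the bent ones by $b$ via edge-disjointness, and bound the straight ones by $\max(2,7-i)$ since they induce a straight-line RAC drawing of $K_{r,i}$---is exactly the paper's argument. The only difference is that you propose to establish the bound on straight members from scratch via a geometric case analysis, whereas the paper simply invokes a known theorem of Didimo, Eades and Liotta: no complete bipartite graph $K_{c,d}$ with $c+d>7$ and $\min(c,d)>2$ admits a straight-line RAC drawing. Your reduction to the two minimal forbidden cases $K_{3,5}$ and $K_{4,4}$ is precisely the content of that theorem, so the ``main obstacle'' you flag is already in the literature and can be replaced by a one-line citation; the per-segment crossing bounds, cyclic-order bookkeeping, and degenerate-case analysis you sketch are unnecessary here.
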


\begin{proof}
    Didimo, Eades and Liotta showed that no complete bipartite graph $K_{c,d}$ with $c + d > 7$ and $\min(c,d) > 2$ admits a straight-line RAC drawing~\cite{Didimo}. 
    Hence, if vertices in $T$ have $3$ neighbors in $C$ then a $b$-bend $\beta$-restricted RAC drawing of $G$ can contain at most $4$ members of $T$ without bends; otherwise, the drawing of $5$ members of $T$ and their $3$ neighbors in $C$ would contradict the first sentence. 
    Similarly, if vertices in $T$ have at least $5$ neighbors in $C$ then a $b$-bend $\beta$-restricted RAC drawing of $G$ cannot contain $3$ members of $T$ without bends.
\qed    \end{proof} 

\Cref{lemma:bigger_types} implies that we can immediately reject instances containing types with more than $3$ neighbors whose cardinality is greater than $4+b$ (or, for the purposes of kernelization, one may replace these with trivial \noInstances{}).
Hence, it now remains to deal with types with precisely two neighbors in $C$.

\iflong
We say that two edges $uv$ and $uv'$ form a \textit{fan anchored} at $u$. 
It is easy to observe that if an edge $e$ crosses both $uv$ and $uv'$ in a $b$-bend $\beta$-restricted RAC drawing, then at least one of these three edges must have a bend~\cite{AngeliniCDFBKS11}. 
\fi

\iflong \begin{lemma} \fi
\ifshort \ifreference \begin{lemma}[$\clubsuit$] \else \begin{lemma} \fi \fi
    \label{lemma:crossing_edges}
    Consider a $b$-bend $\beta$-restricted RAC drawing $\delta$ of $G$, and let $T$ be a type containing vertices with precisely two neighbors in $C$. 
    Let $T'$ be the subset of $T$ containing all members of $T$ which do not have bends in $\delta$. 
    Then $T'$ contains at most four members involved in crossings with other members of $T'$ in~$\delta$.
\end{lemma}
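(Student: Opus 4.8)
The plan is to exploit that every member of $T'$ consists of exactly two straight-line segments --- the drawings of its two edges to the common neighbours $a,b\in C$ of the type $T$. Write $A=\delta(a)$, $B=\delta(b)$, and for $v\in T'$ let $\gamma_v$ be the $A$--$B$ polyline $\delta(av)\cup\delta(bv)$ passing through $\delta(v)$. The first step is to note that any crossing between two members $v,w\in T'$ is ``diagonal'', i.e.\ it belongs to $\delta(av)\cap\delta(bw)$ or to $\delta(bv)\cap\delta(aw)$: the segments $\delta(av)$ and $\delta(aw)$ share the endpoint $A$ and, being straight, can meet in their interiors only by overlapping on a subsegment, which would give infinitely many crossings and contradict our standing assumption; likewise for $\delta(bv)$ and $\delta(bw)$.

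The second step uses the \emph{fan observation}: if a single edge crosses two edges sharing a common endpoint $u$ in a $b$-bend $\beta$-restricted RAC drawing, then at least one of these three edges is bent. Since no member of $T'$ is bent, each segment $\delta(bw)$ (for $w\in T'$) crosses at most one segment of the fan $\{\delta(av):v\in T'\}$ at $a$, and symmetrically each $\delta(av)$ crosses at most one segment of the fan at $b$. Hence, if $H$ is the graph on $T'$ with $v\sim w$ whenever $\gamma_v$ and $\gamma_w$ cross, every vertex of $H$ has degree at most $2$, so each component of $H$ is a path or a cycle. A short combinatorial check shows that the lemma --- at most four vertices of $H$ are non-isolated --- is equivalent to $H$ containing none of the following as a subgraph: a $5$-vertex path $P_5$, a path $P_3$ together with a vertex-disjoint edge, or three pairwise vertex-disjoint edges.

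To exclude these three configurations I would argue by contradiction, the geometric heart being the following fact: if $\delta(av_i)$ and $\delta(bv_j)$ cross at a right angle in a point $q$, then $\delta(bv_j)$ lies on the line through $B$ orthogonal to the supporting line $\ell_i$ of $\delta(av_i)$, so $q$ is exactly the orthogonal projection of $B$ onto $\ell_i$; moreover $q$ lies strictly between $A$ and $\delta(v_i)$ on $\ell_i$ and strictly between $B$ and $\delta(v_j)$. Starting from one of the forbidden configurations, a further application of the fan observation (ruling out two crossings incident to a common member that reuse one of that member's segments) normalizes the configuration, after suitable relabelling, so that every crossing has the form ``$\delta(av_i)$ crosses $\delta(bv_j)$'' and no segment is used in two crossings. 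Propagating the projection/betweenness constraints along the configuration then forces two \emph{non-adjacent} curves $\gamma_{v_i},\gamma_{v_j}$ to intersect as well, and a direction computation shows that this forced intersection cannot be orthogonal --- contradicting that $\delta$ is a RAC drawing.

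I expect this last step to be the main obstacle: making the forced extra intersection and its non-orthogonality precise uniformly over the three configuration shapes and the possible fan orientations. Many sub-cases die immediately by the fan observation --- e.g.\ three $b$-edges that each ``reach past'' $B$ toward the crossed $a$-edges clustered near $B$ would be traversed more than once by a single $a$-edge of the fan --- but the remaining sub-cases need the explicit projection argument, and one must take some care with the degenerate situation where $A$, $\delta(v)$, $B$ are collinear for some $v\in T'$.
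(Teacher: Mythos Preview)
Your setup is sound: the reduction to ``diagonal'' crossings and the use of the fan observation to bound the degree of the crossing graph $H$ by $2$ are both correct. However, your argument has a genuine gap at precisely the place you flag as the main obstacle --- the exclusion of the three forbidden configurations is only sketched, and the case analysis you outline is neither carried out nor clearly finite.

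The paper avoids this difficulty entirely with one clean geometric observation you came close to but did not exploit. Your ``projection fact'' --- that every right-angle crossing between $\delta(av_i)$ and $\delta(bv_j)$ is the foot of the perpendicular from $B$ onto the line through $A$ and $\delta(v_i)$ --- is exactly Thales's theorem in disguise: every such crossing lies on the circle with diameter $\overline{AB}$. The paper then argues directly that on each of the two open semicircles (above and below the line $\overleftrightarrow{AB}$) there is \emph{at most one} crossing among members of $T'$. Indeed, suppose $\delta(ax)$ and $\delta(bw)$ cross at a point $p$ on the upper semicircle, and some other edge $\delta(ay)$ with $y\in T'$ meets that semicircle at a point closer to $B$ than $p$; then the segment $\delta(bw)$, which runs from $B$ through $p$, must also cross $\delta(ay)$, so $\delta(bw)$ hits two edges of the fan at $a$ --- contradicting the fan observation. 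The symmetric statement for the $b$-edges on the other side of $p$ then rules out any second crossing on that semicircle. Hence there are at most two crossings in total among members of $T'$, and so at most four members are involved --- no configuration analysis is needed.

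In short, your projection fact already encodes the Thales circle; using it \emph{globally} (all crossings lie on one circle, then order them along it and apply the fan observation once) rather than \emph{locally} (propagating betweenness constraints along a putative forbidden subgraph of $H$) collapses your intended case analysis to a two-line argument.
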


\iflong \begin{proof} \fi
\ifshort \begin{proof}[Sketch] \fi
    Let $u$ and $v$ be the neighbors of $T'$ in the vertex cover $C$.
    Let us consider the vertices lying on one side of the half plane induced by the line $\overleftrightarrow{\delta(u)\delta(v)}$ going through $u$ and $v$. 
    According to Thales's theorem, every right-angle crossing formed by two edges originating in $u$ and $v$ respectively, has to lie on the semicircle with diameter $\overline{\delta(u)\delta(v)}$. 
    Suppose the edges $(u,x)$ and $(v,w)$ cross at a right angle.
    Then there cannot be another edge incident to $u$ which crosses the semicircle to the right of the first crossing
    \ifshort~(see Figure~\ref{fig:lemma_crossing_edges})\fi.

    \iflong
    Indeed, if there was such an edge $(u,y)$, then $(v,w)$ would cross the fan formed by $(u,x)$ and $(u,y)$ as shown in Figure~\ref{fig:lemma_crossing_edges}. 
    Analogously, there also cannot be an edge incident to $v$, which crosses the semicircle left to the first crossing.
    Hence, there can be at most one right-angle crossing between members of $T'$ below and above $\overleftrightarrow{\delta(u)\delta(v)}$, respectively.
    \fi
\qed    \end{proof} 

\begin{figure}
    \centering
    \includegraphics{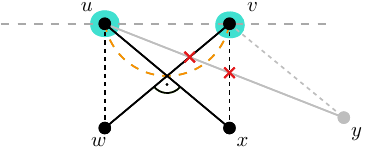}
    \caption{Illustration for \Cref{lemma:crossing_edges}.}
    \label{fig:lemma_crossing_edges}
\end{figure}

Next, we use the above statement to obtain a bound on the total number of crossings that such a type $T$ can be involved in. 
We do so by showing that the members of $T$ which themselves do not have bends are only involved in a bounded number of crossings.

\iflong \begin{lemma} \fi
\ifshort \ifreference \begin{lemma}[$\clubsuit$] \else \begin{lemma} \fi \fi
    \label{lemma:max_of_type_2}
    Consider a $b$-bend $\beta$-restricted RAC drawing $\delta$ of $G$, and let $T$ be a type containing vertices with precisely two neighbors in $C$. 
    Then at most $3k+6+b$ members of $T$ can be involved in a crossing in $\delta$.
\end{lemma}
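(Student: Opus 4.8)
The plan is to split the members of $T$ into those that have at least one bend and those that are drawn straight-line, and to bound each group separately. First, the number of members of $T$ with a bend is at most $b$, since bends are globally limited and each such member consumes at least one of the $b$ bends; call this set $T_b$, so $|T_b| \le b$. It remains to bound the number of straight-line members of $T$ — call this set $T'$ as in Lemma~\ref{lemma:crossing_edges} — that participate in a crossing in $\delta$. Any crossing involving a member of $T'$ is a right-angle crossing, and the other edge in that crossing is either (i) an edge incident to the vertex cover $C$ coming from outside $T$, (ii) an edge with both endpoints in $C$, or (iii) another member of $T$ (either in $T'$ or in $T_b$).

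For case (iii) with the other edge in $T'$, Lemma~\ref{lemma:crossing_edges} already tells us that at most four members of $T'$ are involved in crossings with other members of $T'$. For crossings with members of $T_b$, I would argue that each member of $T_b$ contributes at most a constant (in fact $2$, by an argument in the spirit of Lemma~\ref{lemma:crossing_edges}) number of straight-line members of $T'$ it can cross at a right angle — here the key point is again the Thales/semicircle argument: a single edge emanating from $u$ or $v$ with bounded many segments crosses the two semicircles only a bounded number of times. For cases (i) and (ii), I would count by the number of edges incident to $C$: every edge of $G$ that is not a member of $T$ is incident to $C$, and more to the point, each vertex $w \in C$ together with $u$ (or $v$) can again, via the Thales argument applied to the pair $\{u,v\}$, be crossed at a right angle by the fan at $u$ (resp. $v$) only once per ``side'', so each of the $k$ cover vertices accounts for at most a constant number of right-angle crossings with members of $T'$. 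Putting these together yields a bound of the shape $3k + (\text{constant}) + b$ on the number of members of $T$ involved in a crossing; the bookkeeping should be arranged so the constant comes out as $6$.

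The main obstacle I expect is the careful argument that each vertex $w \in C \setminus \{u,v\}$ (and each bent member of $T$) contributes only a \emph{bounded} number — ideally exactly the right constant — of right-angle crossings with straight-line members of $T'$, rather than a number growing with $|T'|$. This is exactly where the geometry of Lemma~\ref{lemma:crossing_edges} has to be reused: the edges from $u$ to members of $T'$ form a fan anchored at $u$, and likewise at $v$; an edge $wx$ (or a single segment of a bent edge) that crosses two edges of such a fan at right angles forces, by Thales, both crossings onto the semicircle, and the fan-crossing observation (an edge crossing both edges of a fan forces a bend on one of the three) then limits how many fan-edges a single straight segment can right-angle-cross before a contradiction with straight-line-ness arises. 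I would organize the proof so that the ``$3k$'' term collects the contributions of the $k$ cover vertices (with a small constant per vertex absorbed suitably, or split as $2k$ from one side plus $k$ from incidences), the ``$6$'' absorbs the four members from Lemma~\ref{lemma:crossing_edges} together with the $\overleftrightarrow{\delta(u)\delta(v)}$-boundary cases, and the ``$b$'' absorbs $T_b$ directly; then the lemma follows by summing.
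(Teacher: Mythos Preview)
Your split into bent members $T_b$ and straight members $T'$, together with the use of Lemma~\ref{lemma:crossing_edges} for the $T'$--$T'$ crossings, matches the paper's opening moves. The gap is in how you try to bound the number of straight members crossed by edges \emph{outside} $T'$.

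Your plan is to attribute such crossings to the $k$ cover vertices and argue that each $w\in C$ ``accounts for at most a constant number'' of crossed members, via the Thales/fan reasoning. But the Thales argument of Lemma~\ref{lemma:crossing_edges} is specific to crossings between an edge out of $u$ and an edge out of $v$; it says nothing about a general edge $wx$. What \emph{does} hold for a general straight segment $wx$ is the fan observation: since all edges $ut$ (for $t\in T'$) have pairwise distinct directions, $wx$ can be perpendicular to at most one of them, and likewise for the $v$-fan. So a single straight segment crosses at most two members of $T'$ at right angles. However, a cover vertex $w$ may have unbounded degree, hence unboundedly many incident segments, and summing your per-segment bound only yields $O(m+b)$, not $O(k+b)$. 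Nothing in your outline prevents each of the many edges $wx_1,wx_2,\dots$ from crossing a \emph{different} member of $T'$.

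The paper closes this gap by a topological \emph{layering} argument rather than a direct angular count. After discarding the at most four members from Lemma~\ref{lemma:crossing_edges}, the remaining crossed straight members are pairwise non-crossing and therefore form nested triangles (``layers'') around $u$ and $v$. The per-segment observation above now has a topological consequence: a single straight segment can cross at most one layer boundary, so a straight edge connects only vertices in adjacent layer regions. Since every layer is crossed by some edge and every edge has an endpoint in $C$, the cover vertices must be spread across the regions, and each of the $k$ cover vertices can only witness the (at most three) layers touching its own region; this is where the $3k$ comes from, with an extra $+1$ per bend and a small additive constant for a removed parallel pair. Your attribution-to-$C$ idea can in fact be made to work, but only \emph{after} establishing this nesting structure; the Thales/semicircle picture alone does not deliver the per-vertex bound you need.
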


\iflong \begin{proof} \fi
\ifshort \begin{proof}[Sketch] \fi
    Let $T'$ be the subset of $T$ containing all members of $T$ which do not have bends in $\delta$, and let $\gamma=|T|-|T'|$. 
    Further, let $T_0$ be the set of members of $T'$ which are pairwise crossing free, but which all cross at least some other edge in $\delta$. 
    $T_0$ forms a layering structure in $\delta$, as depicted in \Cref{fig:proof_lemma_2}. 
    Moreover, if $T_0$ contains two members that are incident to the same inner face in this layering structure and whose edges are drawn in parallel in $\delta$, we remove one of these members from $T_0$; observe that this may only reduce the size of $T_0$ by one. 
    Let $\alpha$ be the number of members that remain in $T_0$ at this point. 
    \ifshort To complete the proof, it suffices to use the fact that no straight-line segment may cross more than one edge in this layering structure in order to upper-bound $\alpha$ by $3k+1+b-\gamma$, and apply Lemma~\ref{lemma:crossing_edges} to bound the number of members in $T'$ involved in crossings. \fi
    
    \iflong
    At this point, an edge $e$ without any bends cannot cross more than one member of $T_0$, as no two edges on the same face in $T_0$ are parallel by assumption. 
    Without bends, this would imply that there must be a vertex in every layer, and  since each vertex can only be connected to other vertices in the same or in one of the two adjacent layers there can be at most $3k$ layers.
    Introducing bends, an edge outside $T'$ might cross one additional layer per bend; thus increasing the number of possibly crossed members to $3k+1+b$. Since $\gamma$ bends are already used by edges of $T$, we obtain $\alpha \leq 3k+1+b-\gamma$.

    Moving our attention to $T' \supseteq T_0$, the difference between the sizes of these two sets can be caused (1) by up to 4 members that are involved in crossings with other members of $T'$ following \Cref{lemma:crossing_edges} and (2) by one additional member for the single removed member with parallel edges from $T_0$, i.e. $|T'| \leq \alpha+5$. 
    Hence, at most $\alpha+5+\gamma = 3k+6+b$ members of $T$ can be involved in crossings in $\delta$.
    \fi
\qed    \end{proof}

\begin{figure}
    \centering
    \vspace{-0.8cm}
    \includegraphics{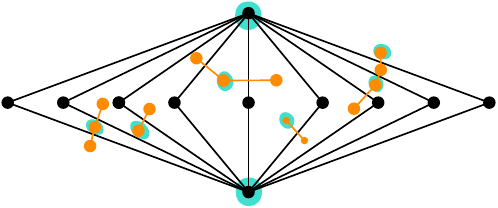}
    \caption{Illustration for the proof of \Cref{lemma:max_of_type_2} with vertices in the vertex cover marked in turquoise.}
    \vspace{-0.3cm}
    \label{fig:proof_lemma_2}
\end{figure}

In particular, Lemma~\ref{lemma:max_of_type_2} implies that in a $b$-bend $\beta$-restricted RAC drawing, every sufficiently large type $T$ with precisely $2$ neighbors in $C$ must contain a member that is not involved in any crossings. 
The next lemma highlights why this is useful in the context of our kernelization.

\iflong \begin{lemma} \fi
\ifshort \ifreference \begin{lemma}[$\clubsuit$] \else \begin{lemma} \fi \fi
    \label{lemma:type2_no_crossing}
    Let $T$ be a type with two neighbors in $C$ and assume that $G$ admits a $b$-bend $\beta$-restricted RAC drawing $\delta$. 
    If there is a member in $T$ whose edges are drawn without crossings in $\delta$, then the graph obtained from $G$ by adding a vertex $w'$ to $T$ admits a $b$-bend $\beta$-restricted RAC drawing as well.
\end{lemma}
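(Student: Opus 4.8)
\emph{Proof plan.} Let $w$ be a member of $T$ whose two edges $wu,wv$ (with $u,v\in C$ the common neighbours of $T$) are drawn without any crossings in $\delta$, and let $G^{+}$ be the graph obtained by adding $w'$ with $N(w')=\{u,v\}$. The plan is to produce a drawing $\delta^{+}$ of $G^{+}$ that coincides with $\delta$ on $G$ and routes $w'u$ and $w'v$ so close to $wu$ and $wv$, respectively, that the two new edges participate in \emph{no} crossing at all. Once this is done the RAC property is immediate: every crossing of $\delta^{+}$ is a crossing of $\delta$ and hence a right-angle crossing, so only the bend count has to be checked separately.

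First I would build the ``empty tube''. Since $wu,wv$ are crossing-free, $\Gamma:=\delta(wu)\cup\delta(wv)$ is a simple arc from $\delta(u)$ to $\delta(v)$ through $\delta(w)$, and one checks from the drawing axioms that $\Gamma$ meets the closed set $K$ consisting of $\delta(V(G)\setminus\{w\})$ together with the drawings of all edges other than $wu,wv$ exactly in the two-point set $\{\delta(u),\delta(v)\}$. A routine compactness argument then gives a connected open neighbourhood $N$ of $\Gamma$ and radii $r_u,r_v>0$ such that $N\setminus(B_{r_u}(\delta(u))\cup B_{r_v}(\delta(v)))$ is disjoint from $K$; shrinking $r_u,r_v$, the only parts of $\delta$ inside $B_{r_u}(\delta(u))$ (resp.\ $B_{r_v}(\delta(v))$) are $\delta(u)$ (resp.\ $\delta(v)$) together with the initial straight stubs of the edges incident to $u$ (resp.\ $v$), because every edge not incident to $u$ is at positive distance from $\delta(u)$. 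Similarly, a small disk $D_w$ around $\delta(w)$ meets $\delta$ only in $\delta(w)$ and the last stubs of $wu,wv$.

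Now I would route $w'u$ and $w'v$ inside $N$. Only finitely many stubs occur at $\delta(u)$, so there is a free angular sector at $\delta(u)$ immediately next to the stub of $wu$; let $\delta(w'u)$ leave $\delta(u)$ inside that sector and then run alongside $\delta(wu)$ within $N$, staying on one fixed side of $\delta(wu)$, until it enters $D_w$, where it ends at a point $\delta(w')$ placed just off $\delta(w)$ on that same side. Route $\delta(w'v)$ symmetrically along $\delta(wv)$, so that it enters $\delta(w')$ from the direction of $v$; choosing $\delta(w')$ in the cone at $\delta(w)$ lying on the chosen side of both stubs ensures that $\delta(w'u)$ and $\delta(w'v)$ leave $\delta(w')$ in the two distinct directions towards $u$ and $v$ and never meet again. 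Outside the three small disks both new curves stay strictly on one side of the crossing-free arc $\Gamma$, and inside each disk only the already-described local picture occurs; hence the new edges cross nothing, and $\delta^{+}$ has precisely the crossings of $\delta$.

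It remains to count bends, and this is the step I expect to be the crux. A curve hugging $\delta(wu)$ at distance $\varepsilon$ can, for $\varepsilon$ small enough, be realized as an offset polyline with the same number of segments as $\delta(wu)$, so $\delta(w'u)$ can be drawn with at most $\beta(wu)$ bends and likewise $\delta(w'v)$ with at most $\beta(wv)$; we extend $\beta$ to the new edges by $\beta(w'u):=\beta(wu)$ and $\beta(w'v):=\beta(wv)$, which takes care of the per-edge constraint. The genuinely delicate point is the \emph{global} bound $b$: the construction above introduces up to $\beta(wu)+\beta(wv)$ additional bends, which could in principle push the total beyond $b$. The natural way around this is to invoke the lemma with a crossing-free member that is in addition drawn \emph{without bends}: since at most $b$ edges (hence at most $b$ members of $T$) carry any bend, while by \Cref{lemma:max_of_type_2} the number of crossing-free members of a sufficiently large type far exceeds $b$, such a member is always available in the setting where this lemma is applied. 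For such a $w$ the edges $\delta(w'u),\delta(w'v)$ are straight segments, no new bends are created, and $\delta^{+}$ is a $b$-bend $\beta$-restricted RAC drawing of $G^{+}$, as required.
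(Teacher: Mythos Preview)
Your approach is essentially the paper's: place $w'$ infinitesimally close to $w$ and route $w'u,w'v$ alongside the crossing-free edges $wu,wv$; the paper compresses all of your tube and stub analysis into a one-line reference to a picture of ``layering triangles''. You are more careful than the paper on one point, namely the global bend budget: the paper's proof tacitly draws $w'u,w'v$ as straight segments and so implicitly assumes $wu,wv$ are themselves straight, whereas you make this assumption explicit and justify it via the application context. Your justification is in fact exactly what is available---the proof of \Cref{lemma:max_of_type_2} bounds the union of members with a crossing \emph{or} a bend by $3k+6+b$, so once $|T|>3k+6+b$ a crossing-free \emph{and} bend-free member exists, and your construction then adds zero bends.
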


\iflong 
\begin{proof}
    Let $u,v$ be the neighbors of $T$ in the vertex cover and let $w \in T$ be the member without crossings. 
    We can draw $w'$ infinitesimally close to $w$ such that the emerging layering triangles are drawn without crossings (see Figure \ref{fig:lemma_type2_no_crossing}).
    \begin{figure}
        \centering
        \includegraphics{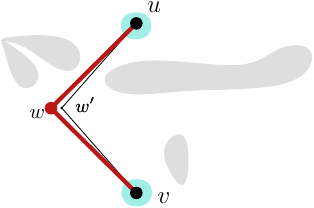}
        \caption{Illustration for the proof of \Cref{lemma:type2_no_crossing}.}
        \label{fig:lemma_type2_no_crossing}
    \end{figure}
\qed    \end{proof} 
\fi

At this point, we have all the ingredients for the main result of this section:

\begin{theorem}
    \label{theorem:vertex_cover}
    \textsc{$b$-bend $\beta$-restricted RAC Drawing} admits a kernel of size $\bigO(b \cdot 2^{k})$, where $k$ is the size of a provided vertex cover of the input graph.
\end{theorem}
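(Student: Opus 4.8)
The plan is to assemble the lemmas from this section into a bounded kernel. Given an instance $(G,b,\beta)$, first compute a vertex cover $C$ of size $k \le 2\cdot\vc(G)$ via the standard approximation, which takes linear time. Then partition $V(G)\setminus C$ into types according to their neighborhood in $C$; there are at most $2^k$ types. Apply Observation~\ref{obs:fenleaves} exhaustively so that no type has fewer than two neighbors in $C$ (degree-one vertices are removed, degree-zero vertices trivially likewise). This constitutes the preprocessing phase, after which every type has either exactly two or at least three neighbors in $C$.

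Next I would handle the two kinds of types separately. For types $T$ with $i\ge 3$ neighbors in $C$, Lemma~\ref{lemma:bigger_types} bounds $|T|$ by $\max(2,7-i)+b \le 4+b$ in any \yesInstance; so if such a type is larger we may immediately output a fixed trivial \noInstance\ (for the kernelization formalism), and otherwise leave the type untouched. For types $T$ with exactly two neighbors in $C$, the key reduction rule is: if $|T| > 3k+6+b$, delete one arbitrary vertex from $T$. To see this is safe, note the forward direction (deleting a vertex preserves drawability) is immediate. For the reverse direction, suppose $G-w$ (with $w\in T$) admits a $b$-bend $\beta$-restricted RAC drawing $\delta$; since after deletion $T\setminus\{w\}$ still has more than $3k+5+b$ members — actually one needs $|T\setminus\{w\}| > 3k+6+b$, so the rule should delete down to exactly $3k+6+b+1$ and the argument applies whenever we are strictly above that threshold — Lemma~\ref{lemma:max_of_type_2} guarantees some member of $T\setminus\{w\}$ is drawn without crossings, and then Lemma~\ref{lemma:type2_no_crossing} lets us re-insert $w$ (as the ``$w'$'' of that lemma) to obtain a drawing of $G$. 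Applying this rule exhaustively leaves every two-neighbor type with at most $3k+6+b+1 = \bigO(k+b)$ members.

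After both rules are exhausted, the resulting graph $G'$ has at most $k$ vertices in the cover, at most $2^k$ types, and each type contributes $\bigO(k+b)$ vertices and a constant number of edges per vertex, for a total size of $\bigO(2^k\cdot(k+b)) = \bigO(b\cdot 2^k)$ — absorbing the $k$ term since the instance size is already at least $k$ and the dominant factor when $b$ is the separate part of the parameter is the product with $2^k$; more precisely the bound is $\bigO((k+b)\cdot 2^k)$, which is $\bigO(b\cdot 2^k)$ under the convention that we may assume $b\ge k$ or simply state the bound as $\bigO((k+b)2^k)$. All steps run in polynomial (indeed near-linear) time: computing the cover, bucketing vertices into types by their $C$-neighborhoods, and repeatedly trimming oversized types. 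Equivalence is preserved at each step by Observation~\ref{obs:fenleaves} and the three lemmas, so $(G,b,\beta)$ is a \yesInstance\ iff the output is, giving the claimed kernel.

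The main obstacle is getting the safeness of the trimming rule for two-neighbor types exactly right: one must check that deleting a single vertex from an oversized type still leaves it large enough that Lemma~\ref{lemma:max_of_type_2} forces an uncrossed member, and that the $\beta$-labels cause no trouble — since Lemma~\ref{lemma:type2_no_crossing} adds a genuinely new vertex $w'$ whose incident edges are drawn crossing-free, we may assign $w'$'s edges any $\beta$-values we like, in particular those originally borne by $w$, so the re-inserted instance is literally (isomorphic to) $G$ with its original $\beta$. One should also note that the bend budget $b$ is not spent by these re-insertions, since the new member is drawn straight-line, so the same $b$ works throughout. The counting to reach $\bigO(b\cdot 2^k)$ is then routine.
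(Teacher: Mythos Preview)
Your proposal is correct and follows essentially the same route as the paper: prune degree-one vertices, reject on oversized types with $\geq 3$ neighbors via Lemma~\ref{lemma:bigger_types}, trim two-neighbor types down to $3k+6+b+1$ members using Lemmas~\ref{lemma:max_of_type_2} and~\ref{lemma:type2_no_crossing}, and then count. The paper's argument is terser (it simply says correctness ``follows from'' the two lemmas, whereas you spell out the forward/backward directions and the $\beta$-label compatibility), and it uses the slightly sharper observation that there are at most $k^2$ two-neighbor types rather than $2^k$, but this does not change the final asymptotic bound.
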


\begin{proof}
    Consider an input $(G,b,\beta)$ and let $C$ be the provided vertex cover of $G$. 
    We apply the simple reduction rule of deleting vertices of degree $1$ from $G$, resulting in an instance where each type has either $2$ or at least $3$ neighbors in $C$. 
    For each type of the latter kind, we check if it contains more members than $\max(3,7-i)+b$; if yes, we reject (or, equivalently, replace the instance with a trivial constant-size \noInstance{}), and this is correct by Lemma~\ref{lemma:bigger_types}. 
    Moreover, for each type $T$ with precisely $2$ neighbors in $C$ containing more than $3 {k}+6+b+1$ many members, we delete members from $T$ until its size is precisely $3{k}+6+b+1$---the correctness of this step follows from Lemma~\ref{lemma:max_of_type_2} and~\ref{lemma:type2_no_crossing}.

    In the resulting graph, each of the at most $2^k$ many types with at least $3$ neighbors in $C$ has size at most $b+4$, while each of the at most $k^2$ types with precisely $2$ neighbors has size at most $3 {k}+6+b+1$. 
    The kernel bound follows.
\qed    \end{proof}

From \Cref{theorem:vertex_cover}, the runtime bound given by \Cref{theorem:explicit_runtime} and the fact that a vertex cover of size at most $2 \cdot \vc(G)$ can be obtained in linear time, we obtain:

\begin{corollary}
    \textsc{$b$-bend $\beta$-restricted RAC Drawing} is fixed-parameter tractable parameterized by $b+\vc(G)$ , and in particular can be solved in time  $2^{2^{\bigO(\vc(G)+\log b)}} + \bigO(|V(G)|)$.
\end{corollary}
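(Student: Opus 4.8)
The plan is to compose the two ingredients already in hand: the kernelization of \Cref{theorem:vertex_cover} and the explicit solver of \Cref{theorem:explicit_runtime}. First I would compute, in linear time, a vertex cover $C$ of $G$ with $k := |C| \le 2\cdot\vc(G)$ via the standard factor-$2$ approximation. Since \Cref{theorem:vertex_cover} is phrased in terms of an arbitrary \emph{provided} cover, I may feed $(G,b,\beta)$ together with $C$ into its kernelization procedure and obtain --- again in linear time, since it only reduces degree-one vertices, groups the remaining vertices into types, and trims oversized types --- an equivalent instance $(G',b,\beta')$ whose size is $\bigO(b\cdot 2^{k}) = \bigO(b\cdot 4^{\vc(G)})$. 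In particular, the number $N$ of edges of $G'$ satisfies $N = b\cdot 2^{\bigO(\vc(G))}$, and writing $s := \vc(G) + \log b$ we have $\log N = \bigO(s)$ and $N^2 = 2^{\bigO(s)}$.

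Second, since \BRAC{} is precisely the problem solved by \Cref{theorem:explicit_runtime}, I would run that algorithm on the kernel $(G',b,\beta')$; by the definition of a kernel this decides the kernel, and hence the original instance, in time $N^{\bigO(N^2)}$. It then remains to collapse this expression:
\[
    N^{\bigO(N^2)} \;=\; 2^{\bigO(N^2\log N)} \;=\; 2^{2^{\bigO(s)}\cdot\bigO(s)} \;=\; 2^{2^{\bigO(s)}} \;=\; 2^{2^{\bigO(\vc(G)+\log b)}},
\]
where the polynomial factor $\bigO(s)$ in the inner exponent is absorbed into the $\bigO(\cdot)$. Adding the $\bigO(|V(G)|)$ time spent on the approximation and on building the kernel yields the stated bound $2^{2^{\bigO(\vc(G)+\log b)}} + \bigO(|V(G)|)$; since this is of the form $f(b+\vc(G)) + \bigO(|V(G)|)$, fixed-parameter tractability parameterized by $b+\vc(G)$ follows. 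As both referenced theorems are constructive, the resulting algorithm also outputs a RAC drawing whenever one exists, by lifting the drawing found for the kernel back through the (answer-preserving) reduction rules.

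Since the corollary is an assembly of the two preceding theorems, I do not expect a genuine obstacle here; the only points that require a moment's care are checking that the kernelization indeed runs within the advertised $\bigO(|V(G)|)$ budget even when it is handed an approximate cover (this is immediate, as the size bounds in \Cref{lemma:bigger_types} and \Cref{lemma:max_of_type_2} are stated in terms of the provided cover's size), and the routine but slightly fiddly bookkeeping that turns $N^{\bigO(N^2)}$ with $N$ singly exponential in the parameter into a clean double-exponential, making sure the extra polynomial factors really are swallowed by the inner $\bigO(\cdot)$.
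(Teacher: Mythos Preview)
Your proposal is correct and follows essentially the same approach as the paper: compute a $2$-approximate vertex cover, apply the kernel of \Cref{theorem:vertex_cover}, run the solver of \Cref{theorem:explicit_runtime} on the kernel, and collapse the resulting expression to a double exponential in $\vc(G)+\log b$. Your bookkeeping via the shorthand $s=\vc(G)+\log b$ is slightly cleaner than the paper's explicit chain of equalities, but the content is the same.
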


\iflong
\begin{proof}
    Applying the runtime result of $m^{\bigO(m^2)}$ given in \Cref{theorem:explicit_runtime} to the given kernel yields a final runtime of 
    \begin{align*}
        \bigO(b \cdot 2^{\vc(G)})^{\bigO(b^2 4^{\vc(G)})} 
        &= 2^{(\log b +  \vc(G)) \cdot \bigO(b^2 4^{\vc(G)})} \\
        &\leq 2^{b^2\log b \cdot 2^{\bigO(\vc(G))}} \\
        &= 2^{2^{(\log \log b)+2\log b}\cdot 2^{\bigO(\vc(G))}}\\
        &=2^{2^{\bigO(\log b+\vc(G))}},
    \end{align*}
    which concludes the proof.
\qed    \end{proof}
\fi

\ifshort
As our final contribution, we show that the above result also implies fixed-parameter tractability with respect to \textit{neighborhood diversity}. 
This is made possible by the following lemma.
\fi

\iflong
\section{An Extension to Neighborhood Diversity}

We extend the approach used for the vertex cover number to establish fixed-parameter tractability with respect to \textit{neighborhood diversity}. 
Briefly recalling the definition of neighborhood diversity, let two vertices $v,v'$ be of the same type if $N(v)\backslash\{v'\}=N(v')\backslash\{v\}$.

\begin{definition}[\cite{Lampis10,KnopKMT19}]
    The neighborhood diversity $\nd(G)$ of a graph $G$ is the minimum number $k$, such that there exists a partition into $k$ sets, where all vertices in each set have the same type.
\end{definition}

By the definition of neighborhood diversity, each set in the witnessing partition is either an independent set or clique in $G$. 
Edges can occur either on all vertices between two sets or on none (see \Cref{fig:ND_overview}). 
In general, a graph $G$ with neighborhood diversity $\nd(G)$ has a bounded vertex cover number $\vc(G)$. 
Thus, \Cref{theorem:vertex_cover} would already imply tractability of $b$-bend RAC drawings under a bounded neighborhood diversity. 
However,  $\vc(G)$ might be exponentially larger~\cite{Lampis10}. 
For $b$-bend RAC drawable graphs, we can show a better, linear bound on $\vc(G)$.
\fi

\iflong \begin{lemma} \fi
\ifshort \ifreference \begin{lemma}[$\clubsuit$] \else \begin{lemma} \fi \fi
    Let $G$ be a $b$-bend RAC drawable graph with a neighborhood diversity $\nd(G)$. 
    Then $\vc(G) \leq 5 \cdot \nd(G) +b$.
    \label{lemma:ND_VC}
\end{lemma}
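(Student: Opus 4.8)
The plan is to bound the vertex cover number of a $b$-bend RAC drawable graph $G$ in terms of its neighborhood diversity by constructing an explicit small vertex cover. Let $\mathcal{P}$ be a witnessing partition of $V(G)$ into $\nd(G)$ types, each of which is either a clique or an independent set, with edges between two types being either complete or empty. A natural vertex cover is obtained by taking all vertices of the clique-types together with one endpoint of each ``complete bipartite'' connection between two types; the difficulty is that both of these contributions can be large — a clique-type can have many vertices, and two independent-set types joined completely form a large $K_{c,d}$ — so I must use the RAC-drawability hypothesis to cap their sizes.

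First I would handle clique-types. A clique on $t$ vertices contains $K_{5}$ as a subgraph once $t \geq 5$, and more generally large cliques are not RAC drawable with few bends: since every graph with $n$ vertices admitting a RAC drawing with at most $b$ total bends has $O(n)$ edges (this is the Didimo–Eades–Liotta style bound, adapted to account for the $b$ bends — each bend can only add a bounded number of extra edges to the density argument), a clique-type of size $t$ contributes $\binom{t}{2}$ edges and hence $t$ must be bounded by a constant plus a term linear in $b$. Concretely I expect each clique-type to have at most roughly $5 + b$ vertices, so the union of all clique-types is vertex-covered by at most $\nd(G)\cdot(5+b)$ vertices — but to get the claimed clean bound $5\cdot\nd(G)+b$ I would instead be more careful: put \emph{all but one} vertex of each clique-type into the cover, and argue that the total number of ``extra'' clique vertices across all types, beyond one per type, is at most $b$ plus a small multiple of $\nd(G)$, by charging the excess edges to bends.

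Next I would handle the bipartite connections. If two types $T_i,T_j$ are completely joined and both are independent sets, then $G$ contains $K_{|T_i|,|T_j|}$; by Lemma~\ref{lemma:bigger_types} (and the underlying result of Didimo, Eades and Liotta that $K_{c,d}$ with $c+d>7$ and $\min(c,d)>2$ has no straight-line RAC drawing, lifted to the $b$-bend setting), we get $\min(|T_i|,|T_j|) \leq \max(2,\,\text{const}) + b$ unless one side has at most $2$ vertices. In either case, the smaller side of each such biclique has size $O(1)+b$, or the connection is handled by a constant number of vertices; adding the smaller side of each of the at most $\binom{\nd(G)}{2}$ connections to the cover would be too much, so instead I would observe that it suffices to put into the cover, for each type $T_i$ that is an independent set and has a neighbor type, just \emph{enough} vertices of $T_i$ to dominate — and since the entire instance has at most $O(\nd(G)+b)$ vertices outside a constant-size core (each ``large'' type being cappable as above once we subdivide for bends), one can simply take the cover to be all of $V(G)$ minus one representative from each type that is an independent set with all incident bicliques small, yielding the bound $5\cdot\nd(G)+b$.

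The main obstacle will be bookkeeping the bends cleanly enough to land exactly on the coefficient $5$ and the additive $+b$ rather than a messier expression: the RAC density bound and Lemma~\ref{lemma:bigger_types} each naturally produce constants like $7$, and combining the clique-type contribution with the biclique contribution while subdividing edges for bends (so that $b$-bend drawings become straight-line drawings of a graph with $b$ extra vertices) requires tracking that the $b$ extra subdivision vertices are shared across all the local arguments and counted only once. I would organize the proof so that all appeals to non-RAC-drawability are funneled through a single statement — ``a graph on $N$ vertices with a $b$-bend RAC drawing has at most $cN + b'$ edges for absolute constants'' — and then a short counting argument over the $\nd(G)$ types delivers the stated bound.
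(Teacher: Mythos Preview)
Your high-level plan---build an explicit vertex cover from the clique-types together with the smaller side of each complete bipartite connection between two independent-set types---is exactly the paper's approach. However, your execution overcomplicates matters and several of your intermediate claims are false.

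The paper's argument is very short. First treat $b=0$. Every clique-type has at most $5$ vertices, since $K_6$ has no straight-line RAC drawing; put all of them in the cover. For any two adjacent independent-set types $S_i,S_j$ with $|S_i|\leq|S_j|$, the Didimo--Eades--Liotta result on $K_{c,d}$ (no straight-line RAC drawing when $c+d>7$ and $\min(c,d)>2$) forces $|S_i|\leq 3$; put $S_i$ in the cover. Each of the $\nd(G)$ types thus contributes at most $5$ vertices, giving $\vc(G)\leq 5\,\nd(G)$. For general $b$, the paper simply observes that the $b$ bends can raise these counts by at most $b$ in total (equivalently: delete one endpoint of every bent edge---at most $b$ vertices---to obtain a graph with a straight-line RAC drawing and no larger neighborhood diversity, apply the $b=0$ case, then add the deleted vertices back to the cover).

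Three concrete problems with your sketch. First, your final plan to ``funnel everything through a single density statement $m\leq cN+b'$'' fails for the bipartite part: $K_{3,d}$ satisfies $3d\leq 4(d+3)-10$ for every $d$, so edge density alone never bounds the smaller side of a biclique; you genuinely need the structural $K_{c,d}$ fact, which you in fact already invoked earlier in the proposal. Second, your worry that adding the smaller side of each of the $\binom{\nd(G)}{2}$ connections is ``too much'' is unfounded: each type enters the cover at most once regardless of how many pairs it participates in, so there are at most $\nd(G)$ contributions, not $\binom{\nd(G)}{2}$. Third, the assertion that ``the entire instance has at most $O(\nd(G)+b)$ vertices outside a constant-size core'' is simply false---a star has $\nd=2$, $b=0$, and arbitrarily many vertices. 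None of these detours is needed; once you drop them, what remains is precisely the paper's two-line argument.
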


\iflong
\begin{proof}
    We begin by showing a linear bound on $\vc(G)$ for $b=0$. 
    Let $S_1,\dots,S_{\nd(G)}$ be a partition witnessing the neighborhood diversity number $\nd(G)$. 
    We build a vertex cover $C$ as follows. 
    The size of each set $S_i$ forming a clique in $G$ is bounded by 5, as a $K_6$ is not straight-line RAC drawable~\cite{DIDIMO20115156}. 
    Put all vertices of such an $S_i$ in $C$. Let $S_i,S_j$ be a pair of two sets, which are both forming an independent set in $G$, and have edges between each other. 
    If there is an edge between a vertex in $S_i$ and a vertex in $S_j$, there is an edge between all vertices of $S_i$ and $S_j$. 
    Let $|S_i| \leq |S_j|$. 
    Recalling that no complete bipartite graph $K_{a,b}$ with $a + b > 7$ and $\min(a,b) > 2$ admits a straight-line RAC drawing~\cite{Didimo}, $S_i \leq 3$. 
    Put $S_i$ into $C$ to cover both sets. 
    In total, we put at most $5 \cdot \nd(G)$ vertices into $C$.

    \begin{figure}
        \centering
        \includegraphics{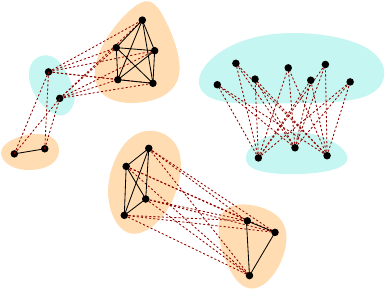}
        \caption{Overview of a graph partitioned into its neighborhood diversity sets. 
        Orange sets build cliques, turquoise sets are independent sets in $G$. 
        Each set may be connected to one ore more other sets.}
        \label{fig:ND_overview}
    \end{figure}

    For arbitrary number of bends $b$, the total number of vertices in clique sets might increase by at most $b$ without making $G$ not $b$-bend RAC drawable. 
    Similarly, the number of vertices in the smaller set $S_i$ of a connected set pair $S_i, S_j$, might increase by at most $b$ over all such sets. 
    So in total, $\vc(G) \leq 5 \cdot \nd(G) +b$.
\qed    \end{proof}

From \Cref{theorem:vertex_cover} and \Cref{lemma:ND_VC} the following theorem follows directly:

\begin{theorem}
    \textsc{$b$-bend $\beta$-restricted RAC Drawing} admits a kernel of size $\bigO(b^2 \cdot \nd(G) \cdot 2^{\nd(G)})$.
\end{theorem}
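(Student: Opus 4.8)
The plan is to combine the vertex cover kernel from \Cref{theorem:vertex_cover} with the structural bound from \Cref{lemma:ND_VC}. The key observation is that once we know a graph is $b$-bend RAC drawable, \Cref{lemma:ND_VC} tells us that its vertex cover number is at most $5\cdot\nd(G)+b$, so we can afford to compute an optimal (or approximate) vertex cover and feed it into the kernelization of \Cref{theorem:vertex_cover}.

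Concretely, I would proceed as follows. First, compute a minimum neighborhood diversity partition of $G$ in polynomial time (possible by \cite{Lampis10}), obtaining $\nd(G)$. Next, compute a vertex cover $C$ of $G$; using the linear-time $2$-approximation gives $|C|\le 2\cdot\vc(G)$. At this stage we branch on whether $|C|$ is ``small'' or ``large'': if $|C| > 2(5\cdot\nd(G)+b)$, then by \Cref{lemma:ND_VC} the instance cannot be $b$-bend RAC drawable (since any RAC-drawable graph would have $\vc(G)\le 5\cdot\nd(G)+b$, contradicting $|C|\le 2\cdot\vc(G)$ only if $\vc(G) > 5\cdot\nd(G)+b$); hence we may safely output a trivial constant-size \noInstance. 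Otherwise $|C|\le 2(5\cdot\nd(G)+b) = 10\cdot\nd(G)+2b = \bigO(\nd(G)+b)$, and we apply \Cref{theorem:vertex_cover} with this vertex cover as the provided set, obtaining a kernel of size $\bigO(b\cdot 2^{k})$ where $k=|C|=\bigO(\nd(G)+b)$.

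The only remaining point is to massage the exponent so that it reads $\bigO(b^2\cdot\nd(G)\cdot 2^{\nd(G)})$ rather than $\bigO(b\cdot 2^{\bigO(\nd(G)+b)})$. Here I would observe that the bound from \Cref{theorem:vertex_cover} is not literally $b\cdot 2^k$ but rather the sum of (at most $2^k$ types of size $\le b+4$) plus (at most $k^2$ types of size $\le 3k+6+b+1$), and that when we only need to retain vertices that could plausibly participate in a RAC drawing we may use the sharper structural bounds rather than the worst-case $2^k$ factor on the number of vertices per type; combined with $k=\bigO(\nd(G)+b)$ this collapses to $\bigO(b^2\cdot\nd(G)\cdot 2^{\nd(G)})$ after separating the $2^b$ contribution, which is absorbed since $b$ bends force at most $b$ ``extra'' vertices spread across the types (as in the proof of \Cref{lemma:ND_VC}). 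I expect the main obstacle to be precisely this bookkeeping: verifying that the $2^{\bigO(b)}$ factor that naively appears in $2^k$ can in fact be replaced by a polynomial factor in $b$, which requires revisiting the counting in the proof of \Cref{theorem:vertex_cover} together with the observation that in a $b$-bend drawing only $\bigO(b)$ members beyond the ``crossing-free'' ones are ever needed per type. Everything else is a routine substitution of parameters.
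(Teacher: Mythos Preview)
Your overall strategy---bound $\vc(G)$ via \Cref{lemma:ND_VC}, reject if a computed cover is too large, and otherwise invoke the kernel of \Cref{theorem:vertex_cover}---is exactly the paper's one-line argument (``follows directly from \Cref{theorem:vertex_cover} and \Cref{lemma:ND_VC}''). The rejection step using the $2$-approximation is also sound, since the greedy matching cover satisfies $\vc(G)\ge |C|/2$.

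There is, however, a real gap in the last paragraph. After plugging $k=\bigO(\nd(G)+b)$ into the bound $\bigO(b\cdot 2^{k})$ you are left with a factor $2^{\bigO(b)}$ coming from the \emph{number of types}, and your proposed remedy---that ``$b$ bends force at most $b$ extra vertices spread across the types''---addresses the size of individual types, not how many types there are. That argument cannot turn $2^{k}$ into something polynomial in $b$.

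The clean observation you are missing is that the number of types with respect to \emph{any} vertex cover $C$ is at most $\nd(G)$. Indeed, if $u,v\notin C$ lie in the same neighborhood-diversity class, then $N(u)\setminus\{v\}=N(v)\setminus\{u\}$, and since $u,v\notin C$ this gives $N(u)\cap C=N(v)\cap C$; hence $u$ and $v$ have the same type. So in the accounting of \Cref{theorem:vertex_cover} you may replace the factor $2^{k}$ (and also the factor $k^{2}$ for the two-neighbor types) by $\nd(G)$. With $k\le 10\,\nd(G)+2b$ this yields a kernel of size
\[
k+\nd(G)\cdot(b+4)+\nd(G)\cdot(3k+7+b)=\bigO\bigl(\nd(G)^{2}+\nd(G)\cdot b\bigr),
\]
and counting edges instead of vertices gives $\bigO(\nd(G)\cdot k\cdot(k+b))=\bigO(\nd(G)\cdot(\nd(G)+b)^{2})$, both comfortably inside the stated $\bigO(b^{2}\cdot\nd(G)\cdot 2^{\nd(G)})$. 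No ``separation of the $2^{b}$ contribution'' is needed once this type-count bound is in place.
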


\begin{corollary}
    \textsc{$b$-bend $\beta$-restricted RAC Drawing} is fixed-parameter tractable parameterized by $\nd(G)+b$, and in particular can be solved in time $2^{b^{\bigO(\nd(G))}} + \bigO(|V(G)|+\nd(G))$.
\end{corollary}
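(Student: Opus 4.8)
The plan is to obtain this corollary exactly as its counterparts for $\fe(G)$ and for $\vc(G)$ were obtained: by feeding the kernel constructed in the theorem immediately above — which has size $\bigO(b^2 \cdot \nd(G) \cdot 2^{\nd(G)})$ — into the explicit solver of \Cref{theorem:explicit_runtime} and then tidying up the resulting expression. First I would compute a partition of $V(G)$ witnessing $\nd(G)$ and, using it, run the reduction rules behind the kernel (delete degree-one vertices and trim each oversized type, just as in the proof of \Cref{theorem:vertex_cover}, now applied with the vertex cover of size $\bigO(\nd(G)+b)$ whose construction is given in the proof of \Cref{lemma:ND_VC}); this is doable within the additive $\bigO(|V(G)|+\nd(G))$ term and yields an equivalent instance $(G',b,\beta')$ with $m' \in \bigO(b^2 \cdot \nd(G) \cdot 2^{\nd(G)})$ edges. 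On $(G',b,\beta')$ I would then call the algorithm of \Cref{theorem:explicit_runtime}, which decides the instance — and, being constructive, outputs a drawing when one exists — in time $(m')^{\bigO((m')^2)}$.

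It then remains to bound $(m')^{\bigO((m')^2)}$. Setting $\mu := b^2 \cdot \nd(G) \cdot 2^{\nd(G)}$ and using $\log\mu = \bigO(\log b + \nd(G))$, this running time is
\[
2^{\bigO(\mu^{2}\log\mu)} \;=\; 2^{\bigO\big(b^{4}\cdot\nd(G)^{2}\cdot 4^{\nd(G)}\cdot(\log b+\nd(G))\big)}.
\]
The only point that needs a moment's thought — and the only place I anticipate even minor friction — is rewriting this exponent in the advertised shape $b^{\bigO(\nd(G))}$: assuming $b\ge 2$, each factor that depends on $\nd(G)$ alone is dominated by a power of $b$ with exponent $\bigO(\nd(G))$ (e.g.\ $4^{\nd(G)}\le b^{2\nd(G)}$, $\nd(G)^{2}\le b^{2\nd(G)}$, and $\log b+\nd(G)\le b^{\nd(G)+1}$), so the exponent is $b^{\bigO(\nd(G))}$ and the overall running time is $2^{b^{\bigO(\nd(G))}}$. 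The degenerate cases $b\in\{0,1\}$ are subsumed under the usual convention that the base $b$ of such a tower is read as a constant at least $2$; concretely, there the kernel size and hence the running time are governed solely by the $\nd(G)$-dependent terms, giving $2^{\nd(G)^{\bigO(\nd(G))}}$.

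Adding back the $\bigO(|V(G)|+\nd(G))$ spent on computing the neighborhood-diversity partition and performing the kernelization yields the claimed total running time $2^{b^{\bigO(\nd(G))}}+\bigO(|V(G)|+\nd(G))$, and fixed-parameter tractability parameterized by $\nd(G)+b$ follows at once. I do not expect any conceptual obstacle beyond this exponent bookkeeping, since all the substantive work — the kernel of the theorem above, \Cref{lemma:ND_VC}, and \Cref{theorem:explicit_runtime} — is already in place.
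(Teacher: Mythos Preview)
Your proposal is correct and follows precisely the route the paper intends: the paper states this corollary without a separate proof, treating it as an immediate consequence of plugging the $\bigO(b^{2}\cdot\nd(G)\cdot 2^{\nd(G)})$ kernel into \Cref{theorem:explicit_runtime}, exactly mirroring the proofs of the $\fe$ and $\vc$ corollaries. Your exponent bookkeeping (including the explicit handling of the $b\le 1$ edge case) is, if anything, more careful than what the paper spells out.
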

\fi

\section{Concluding Remarks}

We have established the fixed-parameter tractability of \textsc{$b$-bend $\beta$-restricted RAC Drawing} when parameterized by the feedback edge number $\fe(G)$, or by the vertex cover number $\vc(G)$ plus an upper bound $b$ on the total number of bends. 
We have also shown that the latter result implies the fixed-parameter tractability of the problem w.r.t.\ the neighborhood diversity $\nd(G)$ plus $b$.

A next step in the computational study of RAC Drawings would be to consider whether the problem is fixed-parameter tractable w.r.t.\ $\vc(G)$ alone.
Interestingly, a reduction rule for degree-$2$ vertices without a bound on $b$ is the main obstacle towards obtaining such a fixed-parameter algorithm, and dealing with this case seems to be required if one wishes to generalize the result towards fixed-parameter tractability w.r.t.\ \emph{treedepth}~\cite{Nesetril2015} plus $b$.
A different question one may ask is whether the fixed-parameter algorithm w.r.t.\ $\fe(G)$ can be generalized towards the recently introduced parameter \emph{slim tree-cut width}~\cite{GanianK22}, which can be equivalently seen as a local version of the feedback edge number~\cite{BrandCGHK22}. 
A natural long-term goal within this research direction is then to obtain an understanding of the complexity of \BRAC\ w.r.t.\ treewidth~\cite{RobertsonS84}. 
Last but not least, it would be interesting to see whether our fixed-parameter tractability results can be strengthened by obtaining polynomial kernels for the same parameterizations.

\paragraph*{Acknowledgments} 
The authors graciously accept support from the WWTF (Project ICT22-029) and the FWF (Project Y1329) science funds.

\bibliographystyle{splncs04}
\bibliography{literature.bib}

\begin{thebibliography}{10}
\providecommand{\url}[1]{\texttt{#1}}
\providecommand{\urlprefix}{URL }
\providecommand{\doi}[1]{https://doi.org/#1}

\bibitem{AngeliniBFK20}
Angelini, P., Bekos, M.A., F{\"{o}}rster, H., Kaufmann, M.: On {RAC} drawings
  of graphs with one bend per edge. Theor. Comput. Sci.  \textbf{828-829},
  42--54 (2020). \doi{10.1016/j.tcs.2020.04.018}

\bibitem{AngeliniBKKP22}
Angelini, P., Bekos, M.A., Katheder, J., Kaufmann, M., Pfister, M.: {RAC}
  drawings of graphs with low degree. In: Szeider, S., Ganian, R., Silva, A.
  (eds.) 47th International Symposium on Mathematical Foundations of Computer
  Science, {MFCS} 2022, August 22-26, 2022, Vienna, Austria. LIPIcs, vol.~241,
  pp. 11:1--11:15. Schloss Dagstuhl - Leibniz-Zentrum f{\"{u}}r Informatik
  (2022). \doi{10.4230/LIPIcs.MFCS.2022.11}

\bibitem{AngeliniCDFBKS11}
Angelini, P., Cittadini, L., Didimo, W., Frati, F., Di~Battista, G., Kaufmann,
  M., Symvonis, A.: On the perspectives opened by right angle crossing
  drawings. J. Graph Algorithms Appl.  \textbf{15}(1),  53--78 (2011).
  \doi{10.7155/jgaa.00217}

\bibitem{ArgyriouBS12}
Argyriou, E.N., Bekos, M.A., Symvonis, A.: The straight-line {RAC} drawing
  problem is np-hard. J. Graph Algorithms Appl.  \textbf{16}(2),  569--597
  (2012). \doi{10.7155/jgaa.00274}

\bibitem{BalkoCG00V022}
Balko, M., Chaplick, S., Ganian, R., Gupta, S., Hoffmann, M., Valtr, P., Wolff,
  A.: Bounding and computing obstacle numbers of graphs. In: Chechik, S.,
  Navarro, G., Rotenberg, E., Herman, G. (eds.) 30th Annual European Symposium
  on Algorithms, {ESA} 2022, September 5-9, 2022, Berlin/Potsdam, Germany.
  LIPIcs, vol.~244, pp. 11:1--11:13. Schloss Dagstuhl - Leibniz-Zentrum
  f{\"{u}}r Informatik (2022). \doi{10.4230/LIPIcs.ESA.2022.11}

\bibitem{BannisterCE18}
Bannister, M.J., Cabello, S., Eppstein, D.: Parameterized complexity of
  1-planarity. J. Graph Algorithms Appl.  \textbf{22}(1),  23--49 (2018).
  \doi{10.7155/jgaa.00457}

\bibitem{basu2006algorithms}
Basu, S., Pollack, R., Roy, M.F.: Algorithms in Real Algebraic geometry,
  Algorithms and Computation in Mathematics, vol.~10. Springer (2006).
  \doi{10.1007/3-540-33099-2},
  \url{http://link.springer.com/10.1007/3-540-33099-2}

\bibitem{BekosDLMM17}
Bekos, M.A., Didimo, W., Liotta, G., Mehrabi, S., Montecchiani, F.: On {RAC}
  drawings of 1-planar graphs. Theor. Comput. Sci.  \textbf{689},  48--57
  (2017). \doi{10.1016/j.tcs.2017.05.039}

\bibitem{BhoreGMN20}
Bhore, S., Ganian, R., Montecchiani, F., N{\"{o}}llenburg, M.: Parameterized
  algorithms for book embedding problems. J. Graph Algorithms Appl.
  \textbf{24}(4),  603--620 (2020). \doi{10.7155/jgaa.00526}

\bibitem{BhoreGMN22}
Bhore, S., Ganian, R., Montecchiani, F., N{\"{o}}llenburg, M.: Parameterized
  algorithms for queue layouts. J. Graph Algorithms Appl.  \textbf{26}(3),
  335--352 (2022). \doi{10.7155/jgaa.00597}

\bibitem{Bieker20}
Bieker, N.: Complexity of graph drawing problems in relation to the existential
  theory of the reals. Ph.D. thesis, Bachelor’s thesis, Karlsruhe Institute
  of Technology (August 2020) (2020)

\bibitem{BrandCGHK22}
Brand, C., Ceylan, E., Ganian, R., Hatschka, C., Korchemna, V.: Edge-cut width:
  An algorithmically driven analogue of treewidth based on edge cuts. In:
  Bekos, M.A., Kaufmann, M. (eds.) Graph-Theoretic Concepts in Computer Science
  - 48th International Workshop, {WG} 2022, T{\"{u}}bingen, Germany, June
  22-24, 2022, Revised Selected Papers. Lecture Notes in Computer Science, vol.
  13453, pp. 98--113. Springer (2022). \doi{10.1007/978-3-031-15914-5\_8}

\bibitem{ChenKX10}
Chen, J., Kanj, I.A., Xia, G.: Improved upper bounds for vertex cover. Theor.
  Comput. Sci.  \textbf{411}(40-42),  3736--3756 (2010).
  \doi{10.1016/j.tcs.2010.06.026}

\bibitem{CourcelleMR00}
Courcelle, B., Makowsky, J.A., Rotics, U.: Linear time solvable optimization
  problems on graphs of bounded clique-width. Theory Comput. Syst.
  \textbf{33}(2),  125--150 (2000). \doi{10.1007/s002249910009}

\bibitem{Cygan}
Cygan, M., Fomin, F.V., Kowalik, L., Lokshtanov, D., Marx, D., Pilipczuk, M.,
  Pilipczuk, M., Saurabh, S.: Parameterized Algorithms. Springer Publishing
  Company, Incorporated, 1st edn. (2015)

\bibitem{GiacomoDEL14}
Di~Giacomo, E., Didimo, W., Eades, P., Liotta, G.: 2-layer right angle crossing
  drawings. Algorithmica  \textbf{68}(4),  954--997 (2014).
  \doi{10.1007/s00453-012-9706-7}

\bibitem{GiacomoDGLR15}
Di~Giacomo, E., Didimo, W., Grilli, L., Liotta, G., Romeo, S.A.: Heuristics for
  the maximum 2-layer {RAC} subgraph problem. Comput. J.  \textbf{58}(5),
  1085--1098 (2015). \doi{10.1093/comjnl/bxu017}

\bibitem{Didimo20}
Didimo, W.: Right angle crossing drawings of graphs. In: Hong, S., Tokuyama, T.
  (eds.) Beyond Planar Graphs, Communications of {NII} Shonan Meetings, pp.
  149--169. Springer (2020). \doi{10.1007/978-981-15-6533-5\_9}

\bibitem{Didimo}
Didimo, W., Eades, P., Liotta, G.: A characterization of complete bipartite rac
  graphs. Information Processing Letters  \textbf{110}(16),  687--691 (2010).
  \doi{10.1016/j.ipl.2010.05.023}

\bibitem{DIDIMO20115156}
Didimo, W., Eades, P., Liotta, G.: Drawing graphs with right angle crossings.
  Theoretical Computer Science  \textbf{412}(39),  5156--5166 (2011).
  \doi{10.1016/j.tcs.2011.05.025}

\bibitem{DidimoLM19}
Didimo, W., Liotta, G., Montecchiani, F.: A survey on graph drawing beyond
  planarity. {ACM} Comput. Surv.  \textbf{52}(1),  4:1--4:37 (2019).
  \doi{10.1145/3301281}

\bibitem{Diestel}
Diestel, R.: Graph Theory, 5th Edition, Graduate Texts in Mathematics,
  vol.~173. Springer (2017). \doi{10.1007/978-3-662-53622-3}

\bibitem{DowneyF13}
Downey, R.G., Fellows, M.R.: Fundamentals of Parameterized Complexity. Texts in
  Computer Science, Springer (2013). \doi{10.1007/978-1-4471-5559-1}

\bibitem{EibenGHKN20b}
Eiben, E., Ganian, R., Hamm, T., Klute, F., N{\"{o}}llenburg, M.: Extending
  nearly complete 1-planar drawings in polynomial time. In: Esparza, J.,
  Kr{\'{a}}l', D. (eds.) 45th International Symposium on Mathematical
  Foundations of Computer Science, {MFCS} 2020, August 24-28, 2020, Prague,
  Czech Republic. LIPIcs, vol.~170, pp. 31:1--31:16. Schloss Dagstuhl -
  Leibniz-Zentrum f{\"{u}}r Informatik (2020).
  \doi{10.4230/LIPIcs.MFCS.2020.31}

\bibitem{EibenGHKN20}
Eiben, E., Ganian, R., Hamm, T., Klute, F., N{\"{o}}llenburg, M.: Extending
  partial 1-planar drawings. In: Czumaj, A., Dawar, A., Merelli, E. (eds.) 47th
  International Colloquium on Automata, Languages, and Programming, {ICALP}
  2020, July 8-11, 2020, Saarbr{\"{u}}cken, Germany (Virtual Conference).
  LIPIcs, vol.~168, pp. 43:1--43:19. Schloss Dagstuhl - Leibniz-Zentrum
  f{\"{u}}r Informatik (2020). \doi{10.4230/LIPIcs.ICALP.2020.43}

\bibitem{FleszarMS18}
Fleszar, K., Mnich, M., Spoerhase, J.: New algorithms for maximum disjoint
  paths based on tree-likeness. Math. Program.  \textbf{171}(1-2),  433--461
  (2018). \doi{10.1007/s10107-017-1199-3}

\bibitem{Forster020}
F{\"{o}}rster, H., Kaufmann, M.: On compact {RAC} drawings. In: Grandoni, F.,
  Herman, G., Sanders, P. (eds.) 28th Annual European Symposium on Algorithms,
  {ESA} 2020, September 7-9, 2020, Pisa, Italy (Virtual Conference). LIPIcs,
  vol.~173, pp. 53:1--53:21. Schloss Dagstuhl - Leibniz-Zentrum f{\"{u}}r
  Informatik (2020). \doi{10.4230/LIPIcs.ESA.2020.53}

\bibitem{FraysseixPP88}
de~Fraysseix, H., Pach, J., Pollack, R.: Small sets supporting f{\'{a}}ry
  embeddings of planar graphs. In: Simon, J. (ed.) Proceedings of the 20th
  Annual {ACM} Symposium on Theory of Computing, May 2-4, 1988, Chicago,
  Illinois, {USA}. pp. 426--433. {ACM} (1988). \doi{10.1145/62212.62254}

\bibitem{Fary48}
Fáry, I.: On straight lines representation of planar graphs. Acta Sci. Math.
  (Szeged)  \textbf{11},  229--233 (1948)

\bibitem{Neighborhoodd}
Ganian, R.: Using neighborhood diversity to solve hard problems. CoRR
  \textbf{abs/1201.3091} (2012). \doi{10.48550/arXiv.1201.3091}

\bibitem{GanianK22}
Ganian, R., Korchemna, V.: Slim tree-cut width. In: Dell, H., Nederlof, J.
  (eds.) 17th International Symposium on Parameterized and Exact Computation,
  {IPEC} 2022, September 7-9, 2022, Potsdam, Germany. LIPIcs, vol.~249, pp.
  15:1--15:18. Schloss Dagstuhl - Leibniz-Zentrum f{\"{u}}r Informatik (2022).
  \doi{10.4230/LIPIcs.IPEC.2022.15}

\bibitem{GanianO21}
Ganian, R., Ordyniak, S.: The power of cut-based parameters for computing
  edge-disjoint paths. Algorithmica  \textbf{83}(2),  726--752 (2021).
  \doi{10.1007/s00453-020-00772-w}

\bibitem{CrossingNPh83}
Garey, M.R., Johnson, D.S.: Crossing number is np-complete. SIAM Journal on
  Algebraic Discrete Methods  \textbf{4}(3),  312--316 (1983).
  \doi{10.1137/0604033}

\bibitem{Grohe04}
Grohe, M.: Computing crossing numbers in quadratic time. J. Comput. Syst. Sci.
  \textbf{68}(2),  285--302 (2004). \doi{10.1016/j.jcss.2003.07.008}

\bibitem{HlinenyS19}
Hlinen{\'{y}}, P., Sankaran, A.: Exact crossing number parameterized by vertex
  cover. In: Archambault, D., T{\'{o}}th, C.D. (eds.) Graph Drawing and Network
  Visualization - 27th International Symposium, {GD} 2019, Prague, Czech
  Republic, September 17-20, 2019, Proceedings. Lecture Notes in Computer
  Science, vol. 11904, pp. 307--319. Springer (2019).
  \doi{10.1007/978-3-030-35802-0\_24}

\bibitem{Huang07}
Huang, W.: Using eye tracking to investigate graph layout effects. In: Hong,
  S., Ma, K. (eds.) {APVIS} 2007, 6th International Asia-Pacific Symposium on
  Visualization 2007, Sydney, Australia, 5-7 February 2007. pp. 97--100. {IEEE}
  Computer Society (2007). \doi{10.1109/APVIS.2007.329282}

\bibitem{HuangEH14}
Huang, W., Eades, P., Hong, S.: Larger crossing angles make graphs easier to
  read. J. Vis. Lang. Comput.  \textbf{25}(4),  452--465 (2014).
  \doi{10.1016/j.jvlc.2014.03.001}

\bibitem{HuangHE08}
Huang, W., Hong, S., Eades, P.: Effects of crossing angles. In: {IEEE} {VGTC}
  Pacific Visualization Symposium 2008, PacificVis 2008, Kyoto, Japan, March
  5-7, 2008. pp. 41--46. {IEEE} Computer Society (2008).
  \doi{10.1109/PACIFICVIS.2008.4475457}

\bibitem{KnopKMT19}
Knop, D., Kouteck{\'{y}}, M., Masar{\'{\i}}k, T., Toufar, T.: Simplified
  algorithmic metatheorems beyond {MSO:} treewidth and neighborhood diversity.
  Log. Methods Comput. Sci.  \textbf{15}(4) (2019).
  \doi{10.23638/LMCS-15(4:12)2019}

\bibitem{Lampis10}
Lampis, M.: Algorithmic meta-theorems for restrictions of treewidth. In:
  de~Berg, M., Meyer, U. (eds.) Algorithms - {ESA} 2010, 18th Annual European
  Symposium, Liverpool, UK, September 6-8, 2010. Proceedings, Part {I}. Lecture
  Notes in Computer Science, vol.~6346, pp. 549--560. Springer (2010).
  \doi{10.1007/978-3-642-15775-2\_47}

\bibitem{Mutzel01}
Mutzel, P.: An alternative method to crossing minimization on hierarchical
  graphs. {SIAM} J. Optim.  \textbf{11}(4),  1065--1080 (2001).
  \doi{10.1137/S1052623498334013}

\bibitem{Nesetril2015}
Nešetřil, J., Ossona~de Mendez, P.: Sparsity: Graphs, Structures, and
  Algorithms, Algorithms and Combinatorics, vol.~28. Springer Berlin Heidelberg
  (2015). \doi{10.1007/978-3-642-27875-4}

\bibitem{RobertsonS84}
Robertson, N., Seymour, P.D.: Graph minors. {III.} planar tree-width. J. Comb.
  Theory, Ser. {B}  \textbf{36}(1),  49--64 (1984).
  \doi{10.1016/0095-8956(84)90013-3}

\bibitem{Schaefer21}
Schaefer, M.: Rac-drawability is $\exists \mathbb{R}$-complete. In: Graph
  Drawing and Network Visualization: 29th International Symposium, GD 2021,
  T\"{u}bingen, Germany, September 14–17, 2021, Revised Selected Papers. p.
  72–86. Springer-Verlag, Berlin, Heidelberg (2021).
  \doi{10.1007/978-3-030-92931-2{\_}5}

\end{thebibliography}
\end{document}